\documentclass[lettersize,journal]{IEEEtran}
\usepackage{amsmath,amsfonts}
\usepackage{array}
\usepackage{textcomp}
\usepackage{stfloats}
\usepackage{url}
\usepackage{verbatim}
\usepackage{graphicx}
\usepackage{multirow}
\def\BibTeX{{\rm B\kern-.05em{\sc i\kern-.025em b}\kern-.08em
T\kern-.1667em\lower.7ex\hbox{E}\kern-.125emX}}
\usepackage{balance}

\usepackage[linesnumbered,ruled,vlined]{algorithm2e}
\usepackage{subcaption}
\usepackage{cite}
\usepackage{tikz}
\usepackage{tkz-tab}
\usepackage{pgfplots}
\usepackage{pgfplotstable}
\usepackage{caption}
\usepackage{amsmath}
\usepackage{longtable}
\usepackage{amssymb}
\usepackage{multicol}
\usepackage{anyfontsize}
\usepackage{xcolor}

\usepackage{amsthm}

\usetikzlibrary{shapes, snakes}
\usetikzlibrary{positioning}
\usepgfplotslibrary{statistics}
\usetikzlibrary{calc}
\usepackage{mwe}

\usepackage{amsmath,amsthm}

\newtheorem{theorem}{Theorem} 

\begin{document}
\title{Syntax Element Encryption for H.265/HEVC Using Chaotic Map-Based Coefficient Scrambling Scheme}
\author{Liang-Wei Li, Chung-Nan Lee,~\IEEEmembership{Member,~IEEE,} Kishu Gupta,~\IEEEmembership{Member,~IEEE,} Huei-Fang Yang,~\IEEEmembership{Member,~IEEE,} Ashutosh Kumar Singh,~\IEEEmembership{Senior member,~IEEE}
\thanks{Manuscript received Received 15 September 2025; revised 9 October 2025; accepted 19 October 2025. Date of publication 23 October 2025; date of current version 7 April 2026. This work was supported in part by the National Science and Technology Council, Taiwan, under Grant NSTC 114-2634-F-110-001-MBK, Grant 114-2640-E-110-004, and Grant 112-2221-E-110-047-MY3. This article was recommended by Associate Editor G. Xu. (\textit{Corresponding author: Kishu Gupta.})}
\thanks{Liang-Wei Li, Chung-Nan Lee, and Huei-Fang Yang are with the Department of Computer Science and Engineering, National Sun Yat-sen University, Kaohsiung, 80424, Taiwan. (e-mail: zaq871210@gmail.com, cnlee@mail.cse.nsysu.edu.tw, hfyang@mail.cse.nsysu.edu.tw).}
\thanks{Kishu Gupta, is with Department of Computer Science and Engineering, National Sun Yat-sen University, Kaohsiung, 804, Taiwan, and also with Department of Computer Science, the VIZJA University, 01-043 Warsaw, Poland. (e-mail: kishuguptares@gmail.com).}
\thanks{Ashutosh Kumar Singh, is with Department of Computer Science and Engineering, Indian Institute of Information Technology Bhopal, Bhopal 462003, India, and also with Department of Computer Science, the VIZJA University, 01-043 Warsaw, Poland. (e-mail: ashutosh@iiitbhopal.ac.in).}
\thanks{Digital Object Identifier 10.1109/TCSVT.2025.3625077}}

	\markboth{IEEE TRANSACTIONS ON CIRCUITS AND SYSTEMS FOR VIDEO TECHNOLOGY, VOL. 36, NO. 4, APRIL 2026}%
{Shell \MakeLowercase{\textit{Li et al.}}: A Sample Article Using IEEEtran.cls for IEEE Journals}

\makeatletter
\newcommand{\removelatexerror}{\let\@latex@error\@gobble}
\def\ps@IEEEtitlepagestyle{%
	\def\@oddfoot{\mycopyrightnotice}%
	\def\@oddhead{\hbox{}\@IEEEheaderstyle\leftmark\hfil\thepage}\relax
	\def\@evenhead{\@IEEEheaderstyle\thepage\hfil\leftmark\hbox{}}\relax
	\def\@evenfoot{}%
}

\def\mycopyrightnotice{%
	\begin{minipage}{\textwidth}
		\centering \scriptsize
		1051-8215 © 2025 IEEE. All rights reserved, including rights for text and data mining, and training of artificial intelligence and similar technologies. Personal use is permitted, but republication/redistribution requires IEEE permission. 
		\\ See https://www.ieee.org/publications/rights/index.html for more information.
		\\ This article has been published in IEEE TRANSACTIONS ON CIRCUITS AND SYSTEMS FOR VIDEO TECHNOLOGY © 2025 IEEE.
	\end{minipage}
}
\makeatother	
\maketitle

\begin{abstract}

In today's digital landscape, high-efficiency video coding (H.265/HEVC) has emerged as the most widely used video coding standard, employing selective encryption schemes to protect the privacy of video content while maintaining efficient compression performance. However, existing coefficient scrambling methods impose a significant computational load, leading to increased bit rate overhead due to encryption, longer execution times, and insufficient safety measures. To address these issues, a new coefficient scrambling scheme based on \textit{chaotic maps} is proposed. This approach leverages the pseudorandomness, ergodicity, and sensitivity to initial conditions inherent in chaotic maps to generate highly unpredictable coefficient distributions, thereby strengthening security while preserving low complexity. Unlike conventional scrambling, chaotic maps ensure minimal correlation between encrypted coefficients, enhancing resistance against statistical and differential attacks. Additionally, the scrambling conditions are specifically designed to minimize the impact on the bit rate overhead. Furthermore, when combined with syntax element encryption (SEC), which includes motion vector difference (MVD), quantized transform coefficients (QTC), and luma intraprediction mode (Luma IPM), this method effectively distorts video content. The proposed scheme operates synchronously with slices, ensuring that the decryption of video content remains intact even if some slices are lost. Additionally, a random sequence generated by AES-CTR is incorporated with the H.265 encoded stream to protect against chosen-plaintext attacks. The experimental results indicate that this scheme features high security, compliance with format standards, fast execution times, synchronous updates with slices, and resilience against common attacks, all while achieving a reduced bit rate overhead of 45.13\% with a lowered average execution time overhead of 1.91\%.
\end{abstract}

\begin{IEEEkeywords}
H.265, selective encryption, syntax element encryption, chaotic map, coefficient scrambling, AES-CTR
\end{IEEEkeywords}

\section{Introduction}
\IEEEPARstart{W}{orldwide} rapid growth in video applications such as video-on-demand (VOD), video conferencing, online medical consultations, online courses, and video surveillance has resulted in massive volumes of streaming data. The sensitive information within these videos is at risk of leakage during transmission and storage due to the limited capabilities of streaming channels and cloud service providers. To safeguard video content from privacy erosion and copyright violations caused by data breaches, multiple solutions have been proposed \cite{c1-6316136, c2-9249233, c7-7351373, c17}. The selective encryption algorithm (SEA) encrypts parts of the H.265 encoded stream (syntax elements) to induce significant distortion while reducing computational complexity; however, it remains vulnerable to leakage of highly sensitive data due to partial content protection \cite{c1-6316136}. A recent memristive Rulkov neuron–based approach introduced an innovative chaos-driven model characterized by strong nonlinearity, a large keyspace, and efficient segment-wise encryption; however, its limited cryptographic validation, finite-precision issues, and potential data leakage constrain its robustness \cite{r4-1-11142957}. Another notable contribution provided a comprehensive review of chaos-based video encryption techniques and current research trends, yet it lacked experimental validation and relied heavily on previously reported findings \cite{r4-2-GAO2025100816}. Furthermore, critical factors such as encryption strength, computational efficiency, format compliance, bit rate overhead, and perceptual video quality must be carefully balanced to ensure secure, correct decoding and seamless playback. Inadequate consideration of these factors may lead to excessive bit rate escalation or playback distortion, undermining the overall effectiveness of video encryption schemes \cite{c19}.

In this context, this paper proposes a novel approach to furnish video content security and privacy by incorporating encryption into selective video content via chaotic map-oriented coefficient scrambling. It only encrypts five syntax elements, including \textit{motion vector difference} (MVD) information (mvd\_sign\_flag, abs\_mvd\_minus2 suffix), \textit{quantized transform coefficient} (QTC, coefficient for short) information (coeff\_sign\_flag, coeff\_abs\_level\_remaining suffix) and \textit{luma intra prediction mode} (Luma IPM) information. Furthermore, at the same time, a coefficient scrambling scheme based on chaotic maps is devised to disturb the coefficient distribution effectively. Additionally, the proposed scheme designs effective coefficient scrambling conditions ushering toward the reduced bit rate overhead rendered by coefficient scrambling. This proposed scheme uses the AES-CTR to generate a random sequence, and the encryption operation is associated with the random sequence. The generated random sequence is synchronized with the slice to ensure that the loss of some slices does not affect the decryption of the video content. In addition, this paper uses the SHA-384 hash value of the slice header to update the initial key and initial vector to resist the chosen plaintext attack.
\subsection{Key Contributions}
The main contributions of this paper are as follows:
\begin{itemize}
\item This paper presents a novel, robust H.265/HEVC video encryption scheme that aims to ensure video privacy and security. It analyzes the relationship between selective syntax elements and encryption, leveraging chaotic map-based coefficient scrambling.
\item This scheme emphasizes the impact of encryption via \textit{selective element encryption}, which involves three types of information: MVD, QTC, and Luma IPM (described in Section \ref{sec:see}), to effectively distort the video content to ensure compliance with format standards and contributes to the results, including bitrate reduction, execution time overhead reduction, and resistance to substitution attack.
\item An efficient coefficient scrambling scheme using chaotic maps, which leverages the pseudorandomness of these maps to effectively disrupt the distribution of coefficients, is presented.
\item The scheme carefully establishes the conditions for coefficient scrambling to protect video edges and minimize the bit rate overhead significantly caused by coefficient scrambling.
\item A comparative analysis, both theoretical and experimental, demonstrated significant improvements in various performance metrics compared with state-of-the-art methods.
\end{itemize}
\textit{Paper Organization:} Section \ref{sec:rel} highlights state-of-the-art approaches for Context-based Adaptive Binary Arithmetic Coding (CABAC), the Selective Encryption Algorithm (SEA), and a chaotic map. Section \ref{sec:proposed} outlines the detailed function of the proposed model. Section \ref{sec:opd} offers an operational summary of the proposed approach along with the performance metrics. Section \ref{sec:res} entails performance evaluation followed by discussion and limitations of the proposed approach in Section \ref{sec:dl}. Conclusive remarks and the future scope of the proposed work are presented in Section \ref{sec:con}. 
\section{Related Work} \label{sec:rel}
\subsection{Context-based Adaptive Binary Arithmetic Coding}
The input syntax element needs to be converted into a \textit{Bin string} through binarization, as illustrated in Fig. \ref{fig:1}. However, if the syntax element itself is represented in binary, binarization can be ignored. Binarization methods include unary code, fixed length code (FL code), truncated rice code (TR code), exponential golomb code (Exp-Golomb code), etc., followed by encoding the Bin string.
\begin{figure}[!htbp]
	\centering
	\includegraphics[width=1.0\columnwidth]{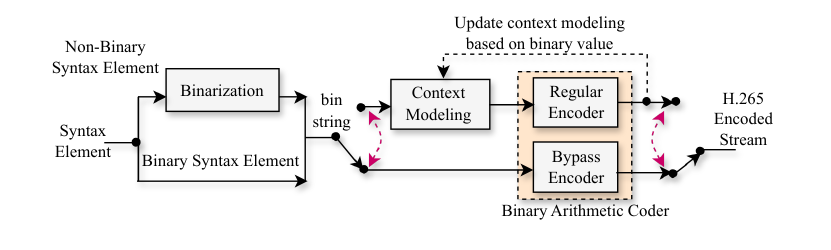}
\caption{The overall architecture of CABAC.}
	\label{fig:1}
\end{figure}

In the regular coding mode, an appropriate probabilistic model is first assigned to each bin string currently input into the context modeler based on the previously encoded bin string. Furthermore, the bin string and the newly configured probability model are input into the \textit{binary arithmetic coder} for encoding. After encoding, the context modeling needs to be updated adaptively based on the output encoding stream. In bypass coding mode, encoding is performed directly with equal probability (0 and 1 each account for half the probability).
\subsection{Selective Encryption Algorithm for H.265/HEVC}
The selective encryption algorithm (SEA) emphasizes security, execution time, format compliance, and bit rate overhead. Format compliance ensures that an encrypted video must be decoded by the decoder and that the video plays successfully, regardless of whether the decryption operation is performed. On the other hand, an encrypted video without decryption will play but appear distorted. A few specific reasons for increasing the bit rate overhead are as follows: 1) When syntax elements are encrypted in regular coding mode, modifications to the encoded data may alter the statistics of the encoded stream, and 2) directly encrypting syntax elements via some binarization method (unary code, TR code, Exp-Golomb code, etc.) can revise the length of the bin string. These prerequisites result in an increased bit rate for the encrypted video.

Lee et al. \cite{c2-9249233} proposed SEA based on the start code. However, according to Lee et al. \cite{c3-9464309}, the encrypted start code can be cracked through a ciphertext-only attack, which is not suitable for practical applications. Wallendael et al. \cite{c4-6486782} studied the encryption of many syntax elements, and the results revealed that encrypting the MVD sign and the QTC sign can cause large distortions and a constant bit rate. Peng et al. \cite{c5-8746776} proposed an H.265 adjustable SE scheme based on chroma IPM and edge coefficient scrambling. The Enc scheme encrypts multiple syntax elements and still reveals some edge information, so the Scr scheme, which adds an edge coefficient scrambling scheme but reduces the bit rate overhead, deserves further study. Sallam et al. \cite{c6-Sallam2018} used the RC6 encryption algorithm to encrypt syntax elements in different operating modes and reported that the best effect was achieved in the CFB operating mode. Farajallah et al. \cite{c7-7351373} proposed two encryption schemes based on the tile concept, locating the region of interest (ROI) in the tile. Zhang et al. \cite{c8-9345211} proposed an SE scheme based on the improved CABAC method to optimize the regular coding mode. Wen et al. \cite{c9-Wen2023} proposed three levels of H.265 SE schemes, and different levels of encrypted videos are used in different scenarios. Yang et al. \cite{c10-7489872} proposed extracting the subencoding stream from the encoded video and encrypting it to avoid the problem of reencoding for the encryption operation. Xu et al. \cite{c11-8713496} proposed an HEVC efficient exchange encryption and data hiding scheme. Boyadjis et al. \cite{c12-7370952} proposed the H.264 and H.265 luma IPM encryption methods, showing that encrypted luma IPM is sufficiently destructive to I frame content. Chen et al. \cite{c13-9903590} proposed an RSVE scheme. The generated random sequence is synchronized with the slice, and the RC4 key is updated through the slice header. The syntax elements are encrypted for MVD, QTC, and IPM information, and combined with coefficient scrambling, the two-round shift method is used to perturb the coefficients. The results show that the RSVE scheme has the characteristics of high security, resistance to common attacks, fast execution time, etc., but reducing the bit rate overhead deserves further study.

Although the RSVE scheme proposed by Chen et al. \cite{c13-9903590} is relatively robust, it lacks an improved overhead bit rate. This paper exploits the RSVE scheme as a benchmark to propose a new coefficient scrambling scheme to effectively reduce the bit rate overhead caused by coefficient scrambling. Additionally, the proposed coefficient scrambling scheme applies to the Scr scheme of Peng et al. \cite{c5-8746776} and the tunable selective encryption scheme of Sallam et al. \cite{c23-8119905}.
\subsection{Chaotic Maps for Pseudorandom Values}
Chaotic maps continually and iteratively generate pseudorandom values through a set of mathematical models. The generated values have characteristics such as ergodicity, randomness, nonperiodicity, and high sensitivity to initial conditions. Chaotic maps are commonly used to create random sequences for video encryption and can offer a large key space. Zhang et al. \cite{c14-9390925} employed a hyperchaotic Lorenz system to generate random sequences. Sallam et al. \cite{c15-Sallam2018} used random bits generated by a chaotic logistic map (CLM) to encrypt the MVD sign and the QTC sign. Liu et al. \cite{c16} exploited an integer dynamic coupling tent map to generate random sequences and supported multicore parallelization operations to increase the speed of generating random sequences. Ye et al. \cite{c17} improved the ICMIC system, combined it with CML, and proposed L-ICMIC-CML to generate random sequences. Moreover, Lu et al. \cite{c18-8977567} provided an efficient scheme to encrypt image leveraging LSS Chatoic Map and Single S-Box. 
\section{Proposed Scheme} \label{sec:proposed}
The proposed scheme for encrypting H.265/HEVC video is driven by chaotic map-based coefficient scrambling and syntax element encryption. It comprises three subparts, as illustrated in Fig. \ref{fig:2}, Fig. \ref{fig:3}, and Fig. \ref{fig:4}, collaboratively.
In the H.265 encoding process depicted in Fig. \ref{fig:2}, data generated from the intra/inter prediction, transform and quantization, and deblocking and SAO filter modules are input into the H.265 encoder. The encoder then uses CABAC to encrypt certain syntax elements. Once the encoding is completed, an encrypted H.265 stream is produced.
\begin{figure}[!ht]
	\centering
	\includegraphics[width=1.0\columnwidth]{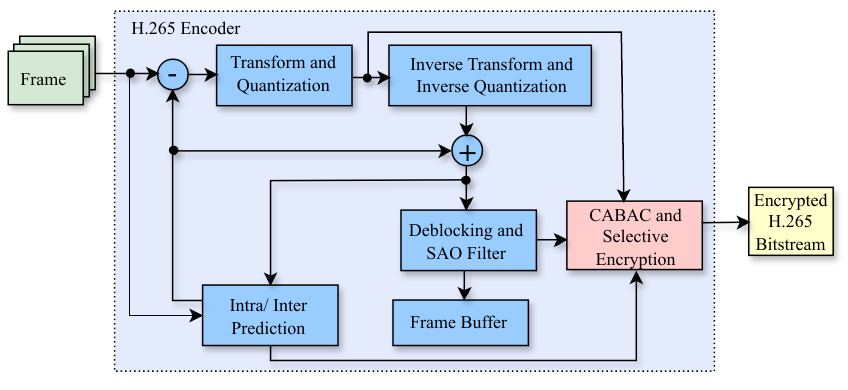}
\caption{The workflow of the proposed scheme.}
	\label{fig:2}
\end{figure}

The internal architecture of CABAC and the selective encryption module is illustrated in Fig. \ref{fig:3}. Any data input to CABAC is referred to as a syntax element. The encryption process for certain syntax elements begins after they have been converted into a binary string through binary conversion.
\begin{figure}[!ht]
	\centering
	\includegraphics[width=1.0\columnwidth]{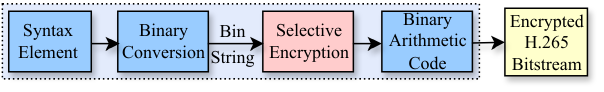}
\caption{CABAC and SEE module.}
	\label{fig:3}
\end{figure}

Furthermore, the encryption process of the proposed scheme is illustrated in Fig. \ref{fig:4}. The original slice is divided into a slice header and slice data. First, the initial key and initial vector are updated via the slice header. Then, AES-CTR is used to create a random sequence for coefficient scrambling and encryption of syntax elements. After that, coefficient scrambling is performed for the QTC, and important syntax elements in the slice data are selected for encryption. Finally, the encrypted slice data and slice header are combined to form an encrypted slice.
\begin{figure}[!ht]
	\centering
	\includegraphics[width=0.99\columnwidth]{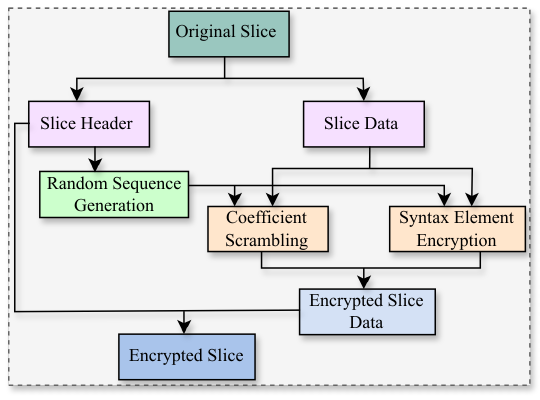}
\caption{Encryption process.}
	\label{fig:4}
\end{figure}
\subsection{Random Sequence Generation}
The AES-CTR is used to create a random sequence considering the initial key size of 256 bits and the initial vector size of 128 bits. The process for generating slice\_key and slice\_IV is explained in the following steps:

Step 1: Obtain the \textit{slice header} and input it into SHA-384 to generate a 384-bit slice\_header\_hash.

Step 2: Perform the \textit{XOR operation} on the first 256 bits of the slice\_header\_hash and the initial key to obtain slice\_key and perform the XOR operation on the last 128 bits of the slice\_header\_hash and the initial vector to obtain slice\_IV as computed in Eqs. (\ref{eq1a}) and (\ref{eq1b}), respectively.
\begin{gather}
	\label{eq1a}
slice\_key = slice\_header\_hash \oplus initial\;key \\
	\label{eq1b}
slice\_IV = slice\_header\_hash \oplus initial\;vector
\end{gather}
The process is repeated each time the encoder encodes a new slice to update slice\_key and slice\_IV. The newly generated slice\_key and slice\_IV are then used as inputs to the AES-CTR to produce a set of random sequences with a length of 64 bits. Additionally, the slice\_IV is incremented to generate a different random sequence for the next time.
\subsection{Coefficient Scrambling}
The coefficient scrambling process comprises a series of steps, as depicted in Fig. \ref{fig:5}, and is explained below.
\begin{figure}[!ht]
	\centering
	\includegraphics[width=1.05\columnwidth]{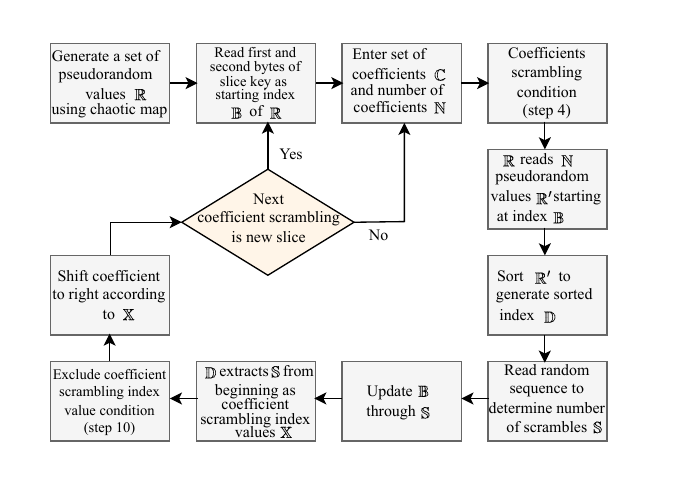}
\caption{Chaotic map-based coefficient scrambling process.}
	\label{fig:5}
\end{figure}
\begin{itemize}
\item Step 1: Generate a set of pseudorandom values $\mathbb{R}$ using chatoic map through Eq. (\ref{eq1}). The system combines the nonlinear dynamics of the logistic and sine maps to enhance ergodicity, sensitivity to initial conditions, and statistical randomness while avoiding the computational burden of higher-dimensional chaotic systems. The logistic-sine low-dimensional discrete chaotic system proposed in \cite{c16} is used to generate pseudorandom values to avoid excessive execution time. 
\begin{equation}
	\label{eq1}
	x(n+1)=\frac{4 \mu}{9} \ast x(n) \ast (1-x(n))+\frac{9-\mu}{9} \ast \sin [\pi\ast x(n)]
\end{equation}
where $0\leq \mu \leq 9$ and where $x(n)$ is a function used to generate pseudorandom values $\in \mathbb{R}$. It is assumed that $\mu =$ 7.3, $x(0) =$ 0.9324, and it iterates 65535 times, resulting in a total of 65536 pseudorandom values.

\item Step 2: Determine the starting index $\mathbb{B}$ of $\mathbb{R}$ by selecting the first and second bytes of slice\_key.

\item Step 3: Specify the coefficient group $\mathbb{C}$ and the total number of coefficients $\mathbb{N}$. If $\mathbb{N} \leq 1$, then end the coefficient scrambling process and jump to step 12 directly.

\item Step 4: Check the coefficient scrambling conditions. The conditions exclude the existence of coefficients in $\mathbb{C}$ as 1 and exclude the existence of the first coefficient greater than 1 in the front section of $\mathbb{C}$ as 2. The range for the front section is $1 \thicksim \min(n, 8)$; if it exists, set $\mathbb{F} = 1$. After that, the coefficients are eliminated; if $\mathbb{N} \leq 1$, end the coefficient scrambling process, and jump to step 12.

\item Step 5: Now, the pseudo-random value $\mathbb{R}$ reads $\mathbb{N}$ pseudorandom values $\mathbb{R}^{\prime}$ from index $\mathbb{B}$.

\item Step 6: Sort $\mathbb{R}^{\prime}$ to generate the sorted index $\mathbb{D}$. The process for sorting index generation is illustrated in Fig. \ref{fig:6}.
\begin{figure}[!htbp]
	\centering
	\includegraphics[width=0.9\columnwidth]{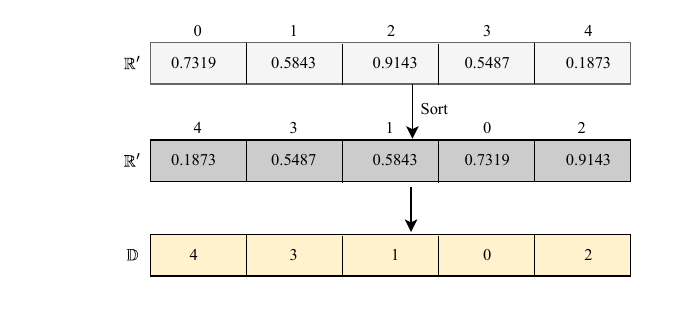}
\caption{Sorted index generation.}
	\label{fig:6}
\end{figure}

\item Step 7: Determine the number of coefficient scrambles $\mathbb{S}$ by using Eq. (\ref{eq2}), which needs to be associated with the random sequence value $\mathbb{S}_i$.
\begin{equation}
	\label{eq2}
	\mathbb{S}=\mod \left(\mathbb{S}_i, \mathbb{N}-1\right) + 2; \,\,\, 0\leq \mathbb{S}_i \leq 15
\end{equation}
\item Step 8: Update $\mathbb{B}$ through $\mathbb{S}$; here, set $\mathbb{B}=\mathbb{B}+\mathbb{S}$.

\item Step 9: $\mathbb{D}$ extracts $\mathbb{S}$ from the beginning as coefficient scrambling index values $\mathbb{X}$.

\item Step 10: If $\mathbb{F}$= 0, check the exclusion coefficient scrambling index value condition. The condition is whether $\mathbb{C}$ and $\mathbb{X}$ exist when $\mathbb{C}[\mathbb{X}[i-1]]$ = 2 and $\mathbb{X}[i]$ = 0. If this occurs, exclude $\mathbb{X}[i]$, $\mathbb{S} -$ 1. After the coefficient scrambling index values are eliminated, if $\mathbb{S} \leq$ 1, terminate the coefficient scrambling process and jump to step 12.

\item Step 11: Shift the coefficient to the right according to $\mathbb{X}$. The process of coefficient perturbation is depicted in Fig. \ref{fig:7}.
\begin{figure}[!ht]
	\centering
	\includegraphics[width=0.9\columnwidth]{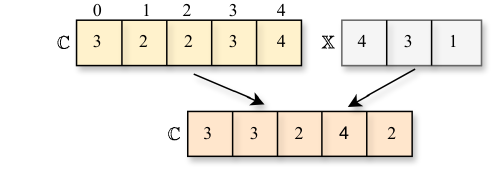}
\caption{Coefficient perturbation.}
	\label{fig:7}
\end{figure}

\item Step 12: If the next coefficient scrambling is a new slice, return to step 2; otherwise, return to step 3.
\end{itemize}

Compared with the coefficient scrambling scheme proposed by related research, the edge coefficient scrambling scheme proposed by the Scr scheme of \cite{c5-8746776} uses the exchange value method to perturb the coefficients. However, the proposed scheme is not associated with a random sequence and uses only the rand function to determine the exchange object, which results in extended security. The coefficient scrambling scheme of \cite{c13-9903590} uses a two-round shift for perturbation, as illustrated in Fig. \ref{fig:8}.
\begin{figure}[!ht]
	\centering
	\includegraphics[width=0.9\columnwidth]{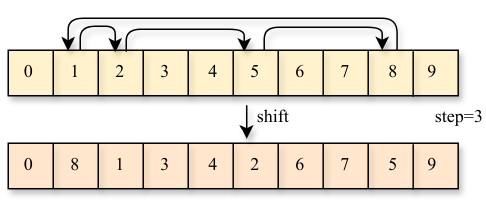}
\caption{One-round shift.}
	\label{fig:8}
\end{figure}

If the step is larger, the amplitude of the disturbance is smaller, and for a smaller step, it is equivalent to shifting all the coefficients to the right, and the degree of disturbance is still not good. Therefore, the coefficient scrambling scheme proposed by \cite{c13-9903590} is still limited in the distribution of coefficient perturbations. The scheme proposed in this paper successfully furnishes a more random coefficient perturbation distribution.
\subsection{Syntax Element Encryption} \label{sec:see}
\subsubsection{MVD Information}
The encryption operations MVD\_Hor\_Sign and MVD\_Ver\_Sign are performed via Eqs. (\ref{eq3}) and (\ref{eq4}), respectively.
\begin{equation}
	\label{eq3}
	\begin{split}
	en\_MVD\_Hor\_Sign= \\
	MVD\_Hor\_Sign \oplus \mathbb{S}_i;\,0\leq \mathbb{S}_i\leq 1
	\end{split}
	\end{equation}
\begin{equation}
	\label{eq4}
	\begin{split}
	en\_MVD\_Ver\_Sign= \\
	MVD\_Ver\_Sign \oplus \mathbb{S}_i;\, 0\leq \mathbb{S}_i \leq 1
	\end{split}
\end{equation}
The MVD size consists of the base level and the remaining level. The structure of the remaining level is the [$\mathbb{N}$-1 zeros prefix][1][$\mathbb{N}$ bits suffix]. The remaining\_MVD\_Hor\_Suffix and remaining MVD\_Ver\_Suffix are encrypted via Eqs. (\ref{eq5}) and (\ref{eq6}), respectively.
\begin{equation}
		\label{eq5}
	\begin{split}
	en\_remaining\_MVD\_Hor\_Suffix= \\
	remaining\_MVD\_Hor\_Suffix \oplus \mathbb{S}_i;\,0\leq \mathbb{S}_i\leq 2^{\mathbb{N}} 
	\end{split}
\end{equation}
\begin{equation}	
	\label{eq6}
	\begin{split}
	en\_remaining\_MVD\_Ver\_Suffix=\\
	 remaining\_MVD\_Ver\_Suffix \oplus \mathbb{S}_i;\, 0\leq \mathbb{S}_i \leq 2^{\mathbb{N}}
	\end{split}
\end{equation}
\subsubsection{QTC Information}
The encryption coeff\_Sign is performed via Eq. (\ref{eq7}).
\begin{equation}
	\label{eq7}
		en\_coeff\_Sign= coeff\_Sign \oplus \mathbb{S}_i;\,0\leq \mathbb{S}_i\leq 1 
\end{equation}
The QTC size consists of the base level and the remaining level. The structure of the remaining level is [$\mathbb{M}$-1 ones prefix][0][$\mathbb{M}$ bits suffix].

The rice parameter $r$ of the TR code is computed via Eq. (\ref{eq8}).
\begin{equation}
	\label{eq8}
	r= \min (4, r+1) \,\,\, if\,\,\,  QTC\_size >3\times2^r 
\end{equation}
Encryption of the remaining\_Coeff\_Suffix may change the rice parameter, causing format-compliant failure. Therefore, the remaining\_Coeff\_Suffix requires certain conditions before encryption is allowed, as does the number of encryption bits $\mathbb{L}$. If encryption is allowed, the remaining\_Coeff\_Suffix is encrypted via Eq. (\ref{eq9}).
\begin{equation}
	\label{eq9}
	\begin{split}
	en\_remaining\_Coeff\_Suffix=\\
	 remaining\_Coeff\_Suffix \oplus \mathbb{S}_i;\, 0\leq \mathbb{S}_i \leq 2^{\mathbb{L}}
	 \end{split}
\end{equation}
\subsubsection{Luma IPM Information}
The Luma IPM comprises 35 prediction modes, including planar mode, DC mode, and 33 angle modes. Encryption luma\_IPM is performed via Eq. (\ref{eq10}).
\begin{equation}
	\label{eq10}
	\begin{split}
		en\_luma\_IPM=\\
		\mod (luma\_IPM + \mathbb{S}_i, 35);\, 0\leq \mathbb{S}_i \leq 63
	\end{split}
\end{equation}
Directly encrypting luma\_IPM will cause format-compliant failure; therefore, luma\_IPM and en\_luma\_IPM are first recorded, followed by three additional actions.
\begin{enumerate}
\item To determine the three most likely candidate prediction modes of adjacent PUs, en\_luma\_IPM is used to obtain the luma\_IPM of adjacent PUs.
\item The coefficient scan mode is associated with luma\_IPM \cite{c12-7370952} and is determined via en\_luma\_IPM.
\item Chroma\_IPM is also associated with luma\_IPM, as depicted in Table \ref{tab:fig9}.
\end{enumerate}
\begin{table}[!htbp]
	\caption{Association of chroma\_IPM and luma\_IPM}\label{tab:fig9}
	\centering
	\resizebox{0.98\columnwidth}{!}{
		\tiny
		\begin{tabular}{|p{15pt}|p{31pt}|c|c|c|c|c|}
			\hline 
				\multicolumn{2}{|c|}{} & \multicolumn{5}{c|}{\textbf{Corresponding luma prediction mode}} \\
				\cline{3-7}\multicolumn{2}{|c|}{\textbf{Chroma IPM ID}}& \textbf{0} & \textbf{26}& \textbf{10}& \textbf{1} & \textbf{34} \\  
				\multicolumn{2}{|c|}{}& \textbf{(Planar)} & \textbf{(Vertical)}& \textbf{(Horizontal)}& \textbf{ (DC)} &  \\  \hline 
				 \cline{2-7}&0 (Planar)&4&0&0&0&0 \\ 
				\cline{2-7}Chroma &26 (Vertical)&1&4&1&1&1 \\ 
				\cline{2-7}prediction&10 (Horizontal)&2&2&4&2&2 \\ 
				\cline{2-7}mode&1 (DC)&3&3&3&4&3 \\ 
				\cline{2-7}&34&0&1&2&3&4 \\ \hline
	\end{tabular}}
	
\end{table}

If the corresponding luma\_IPM is encrypted and the chroma\_IPM ID changes, a different chroma\_IPM is obtained during decoding. Therefore, before encoding chroma\_IPM, we must first determine whether chroma\_IPM needs to be adjusted; as shown in Algorithm \ref{algo:alg1}. Furthermore, when the chroma\_IPM ID is determined, the corresponding luma\_IPM uses en\_luma\_IPM. Additionally, the coefficient scan mode is also associated with chroma\_IPM, and the adjusted chroma\_IPM is used to make final decisions. 
\section{Theoretical Analysis}\label{sec: ta}

\subsection{Formal Theoretical Proof to Resist Chosen-Plaintext Attacks} \label{sec:ta1}
\begin{theorem}
	The proposed Syntax Element Encryption for H.265/HEVC using Chaotic Map-Based Coefficient Scrambling Scheme is secure against Chosen-Plaintext Attack (CPA) under the assumption that the underlying chaotic map is computationally indistinguishable from a random permutation and the SHA-384 based key update process behaves as a pseudo-random function (PRF). 
\end{theorem}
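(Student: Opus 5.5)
We argue by a standard game-hopping (hybrid) reduction in the IND-CPA model.
The real experiment $\mathsf{G}_0$ is as follows. A challenger samples the secret initial key (256 bits) and initial vector (128 bits). The adversary $\mathcal{A}$ adaptively submits H.265 slices, i.e.\ slice headers and slice data of its choice, and receives the encrypted slices. It then submits two equal-length challenge slices $P_0,P_1$ whose syntax-element structure coincides, meaning the same bin-string lengths, rice parameters and scrambling-eligible positions. This restriction is necessary because the scheme is format compliant and therefore leaks structure by design. $\mathcal{A}$ receives the encryption of $P_b$ and outputs a guess $b'$. Our goal is to show
\[
\Big|\Pr[b'=b]-\tfrac12\Big|\le \mathrm{Adv}^{\mathrm{prf}}_{\mathrm{SHA}}+\mathrm{Adv}^{\mathrm{prf}}_{\mathrm{AES}}+\mathrm{Adv}^{\mathrm{ind}}_{\mathrm{chaos}}+\frac{q^2}{2^{128}},
\]
where $q$ is the number of queried slices.

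The hops are carried out in the following order.
\begin{itemize}
\item[$\mathsf{G}_1$.] Replace the derivation of slice\_key and slice\_IV in Eqs.~(\ref{eq1a})--(\ref{eq1b}) by a truly random function of the slice header. The PRF assumption on the SHA-384-based key update bounds the distance between $\mathsf{G}_0$ and $\mathsf{G}_1$. Distinct headers then yield independent per-slice keys and IVs. A repeated header produces a repeated key stream, but only on identical headers, which we handle by the usual convention, or by noting that the slice address/POC differs across slices.
\item[$\mathsf{G}_2$.] Replace AES-CTR under the now-random slice\_key by a random function. This hop costs $\mathrm{Adv}^{\mathrm{prf}}_{\mathrm{AES}}$ plus a counter-collision term of order $q^2/2^{128}$ coming from the incremented slice\_IV. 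After this hop every $\mathbb{S}_i$ consumed by Eqs.~(\ref{eq2})--(\ref{eq10}) is uniform and independent.
\item[$\mathsf{G}_3$.] Replace the sorted index $\mathbb{D}$ obtained from the chaotic values $\mathbb{R}'$ (read from an offset $\mathbb{B}$ taken from slice\_key) by a uniformly random permutation. This step invokes the stated indistinguishability assumption on the chaotic map.
\end{itemize}

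In $\mathsf{G}_3$ we show that the challenge ciphertext is identically distributed for $b=0$ and $b=1$. The per-element arguments are as follows.
\begin{itemize}
\item The sign bits of MVD and QTC, and the MVD and coefficient suffixes, are XORed with fresh uniform masks of matching length (Eqs.~(\ref{eq3})--(\ref{eq9})). This is a one-time pad, so these fields are uniform.
\item Luma IPM uses $\mathrm{mod}(\cdot+\mathbb{S}_i,35)$ with $\mathbb{S}_i$ uniform on $[0,63]$. This is not exactly uniform; its statistical distance from uniform is at most $29/64$ per mode. We therefore either charge this bias explicitly in the bound or assume rejection sampling. We flag this as a gap to be stated honestly.
\item The coefficient shift of Step~11 applies a uniformly random index set $\mathbb{X}$. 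Composed with the already-masked signs and suffixes, this is simulatable from the public structure alone.
\end{itemize}
Summing the hop distances gives the displayed bound.

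The main obstacle is the third hop together with the leakage in the scrambling conditions. Steps~4 and~10 make the number of scrambled coefficients and the excluded indices depend on plaintext values (coefficients equal to 1 or 2). The permutation itself is also fixed by the public chaotic parameters $\mu, x(0)$, so its only secret input is the 16-bit offset $\mathbb{B}$. Consequently the chaotic assumption cannot give more than 16 bits of entropy per slice. To handle this, we would strengthen the challenge-equivalence condition to include these plaintext-dependent predicates, so that they become part of the allowed leakage. We would then argue that, conditioned on them, the order of the shifted coefficients is masked mainly by the independent AES-derived values (the count $\mathbb{S}$ and the sign and suffix masks), treating the chaotic permutation as an additional rather than a sole source of hiding. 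If this fails, we would restrict the theorem's security claim to the XOR-encrypted syntax elements and regard scrambling as heuristic.
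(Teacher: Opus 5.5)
The step that fails is your closing claim that the challenge ciphertext in $\mathsf{G}_3$ is independent of $b$. It fails by a non-negligible margin, so none of the patches you list rescues the theorem as stated.

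For Luma IPM, $\mathbb{S}_i \bmod 35$ takes each of $0,\dots,28$ with probability $2/64$ and each of $29,\dots,34$ with probability $1/64$. Two plaintext modes whose difference modulo $35$ lies in $\{6,\dots,29\}$ therefore give en\_luma\_IPM distributions at statistical distance $6/64$. Any standard decoder recovers en\_luma\_IPM exactly, and every intra PU uses a fresh $\mathbb{S}_i$. So an adversary whose challenge slices agree except that all luma modes are $0$ in $P_0$ and $17$ in $P_1$ just compares how many decoded modes fall in $\{29,\dots,34\}$ versus $\{11,\dots,16\}$. Its advantage tends to $1$ with the number of PUs. Charging this term makes your bound vacuous, and rejection sampling is a different scheme. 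The only statement about this scheme that survives puts the luma modes into your equivalence relation, i.e.\ declares them leaked.

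The scrambling fails in the same way. Even with $\mathbb{D}$ uniform, Step~11 applies a random $\mathbb{S}$-cycle, not a uniform permutation. When exactly two coefficients survive Step~4, Eq.~(\ref{eq2}) gives $\mathbb{S}=\mathrm{mod}(\mathbb{S}_i,1)+2=2$, a fixed swap. Take a group whose two levels are $3$ and $20$ (so Steps~4 and~10 do not intervene) and reverse it between $P_0$ and $P_1$. This is detected with certainty from the remaining-level prefixes, which are sent in clear, like the significance and greater1/greater2 flags. Any relation loose enough to admit such pairs is broken. Yours excludes them, but then the scrambling contributes nothing, and neither do the hop $\mathsf{G}_2\to\mathsf{G}_3$ or the chaotic hypothesis (which, as you note, rests on a 16-bit secret). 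What you can actually prove is IND-CPA with leakage of everything except the XOR-masked sign and suffix bits. That is your fallback, not the theorem.

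Two smaller points concern the earlier hops.
\begin{itemize}
\item SHA-384 is unkeyed and the header is public, so $slice\_key(h)\oplus slice\_key(h')$ is publicly computable. Hence Eqs.~(\ref{eq1a})--(\ref{eq1b}) cannot be a PRF in any literal sense, and $\mathsf{G}_0\to\mathsf{G}_1$ really needs related-key PRF security of AES-256 under public XOR offsets.
\item Encryption is deterministic in the header, so nonce-respecting behaviour has to be built into the definition. Your appeal to ``POC differs'' is not available when $\mathcal{A}$ chooses the headers.
\end{itemize}

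The paper's proof is a single reduction: $\mathcal{B}$ embeds its challenge permutation in the scrambling of $C_b$, outputs ``real'' iff $b'=b$, and asserts that it inherits $\mathcal{A}$'s advantage. That assertion silently requires $\Pr[b'=b]=\tfrac12$ when the permutation is truly random, which is exactly your $\mathsf{G}_3$ claim. The reduction also neither simulates the encryption oracle nor addresses the XOR-masked or IPM fields. Your hybrid is the more careful version of the same idea. The step you could not close is the one the paper asserts without argument, and the examples above show it cannot be supplied under any notion meant to protect luma modes or coefficient positions.
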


\begin{proof}
	Under the \textit{Chaotic Map Pseudorandomness Assumption (CMPA)} the output sequence $S$ ( initialized with a secret key $K$) is indistinguishable from a truly random permutation without knowledge of $K$. Assume for contradiction there exists a probabilistic polynomial-time (PPT) adversary $\mathcal{A}$ breaks the CPA security of our scheme with non-negligible advantage $\epsilon$. We construct a simulator $\mathcal{B}$ that uses $\mathcal{A}$ to distinguish between a real chaotic map and a random permutation, thereby breaking CMPA or PRF assumptions:
		\begin{itemize}
			\item $\mathcal{B}$ receives a real chaotic permutation derived from a secret key $K$, or a truly random permutation.
			\item $\mathcal{A}$ selects two plaintext coefficient sets $C_0$ and $C_1$.
			\item $\mathcal{B}$ selects a random bit $b \in \{0,1\}$, scrambles $C_b$ using the permutation provided by $\mathcal{C}$, and returns the ciphertext $C_b'$ to $\mathcal{A}$. Here, $\mathcal{C}$ is the plaintext syntax element set being scrambled and encrypted. 
			\item If $b' = b$, then $\mathcal{B}$ guesses that the permutation is real chaotic map; otherwise, random.
		\end{itemize}
	If $\mathcal{A}$ has advantage $\epsilon$, so does $\mathcal{B}$, violating CMPA or PRF assumptions. Since these primitives are secure by hypothesis, $\epsilon$ must be negligible. Thus, the proposed scheme resists CPA.
\end{proof}
\subsection{Chaotic System Dynamics Analysis} \label{sec:ta2}
\begin{theorem}
	The Logistic-Sine Chaotic Map (LSCM) used in the proposed Syntax Element Encryption for H.265/HEVC exhibits properties of a robust chaotic system namely, parameter sensitivity, chaotic region validity, and positive Lyapunov exponent ensuring the unpredictability and security of the generated pseudorandom sequences.
\end{theorem}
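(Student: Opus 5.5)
To prove the theorem I would study the map of Eq. (\ref{eq1}) directly as a one-dimensional dynamical system $f_\mu:[0,1]\to\mathbb{R}$,
\begin{equation*}
f_\mu(x)=\tfrac{4\mu}{9}x(1-x)+\tfrac{9-\mu}{9}\sin(\pi x),\qquad 0\le\mu\le 9 .
\end{equation*}
I would then establish the three claimed properties in the order chaotic region validity, positive Lyapunov exponent, and parameter sensitivity. Each later property relies on the earlier ones.

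\textbf{Chaotic region validity.} First I show that $[0,1]$ is invariant. For $x\in[0,1]$ both $x(1-x)\in[0,1/4]$ and $\sin(\pi x)\in[0,1]$, and both weights are nonnegative. So $0\le f_\mu(x)\le \mu/9+(9-\mu)/9=1$. Both terms are symmetric about $x=1/2$, so the maximum is $f_\mu(1/2)=1-2\mu/27$. I will need this to be large enough that the orbit covers $[0,1]$ essentially fully. The paper applies the map modulo 1 in \cite{c16}, and if that convention holds I would use it: then the image wraps and no range restriction is needed. Without wrapping, the invariant interval is $[f_\mu(f_\mu(1/2)),f_\mu(1/2)]$. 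This gives the standard unimodal-map setting: $f_\mu$ is smooth, symmetric, and unimodal with a single critical point at $1/2$. Its Schwarzian derivative is negative, which I would check from $f'_\mu(x)=\tfrac{4\mu}{9}(1-2x)+\tfrac{(9-\mu)\pi}{9}\cos(\pi x)$ and its higher derivatives. Negative Schwarzian lets me invoke Singer's theorem and the kneading theory of unimodal maps: for $\mu=7.3$ there is at most one attracting periodic orbit, and chaos holds whenever none exists.

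\textbf{Positive Lyapunov exponent.} I define
\begin{equation*}
\lambda(\mu)=\lim_{n\to\infty}\frac1n\sum_{k=0}^{n-1}\ln|f'_\mu(x(k))| .
\end{equation*}
The plan has two parts.
\begin{itemize}
\item An analytic lower bound. I would show that $|f'_\mu|>1$ on most of $[0,1]$. Near the endpoints, $|f'_\mu(0)|=\tfrac{4\mu}{9}+\tfrac{(9-\mu)\pi}{9}>1$ for all $\mu\in[0,9]$. Expansion fails only in a small neighborhood $U$ of $1/2$. I would then bound the frequency of visits to $U$, either through an invariant density or through a Collet--Eckmann-type condition on the critical orbit.
\item A rigorous numerical certificate. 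Where the analytic bound is too weak, I would iterate $x(0)=0.9324$ for $65536$ steps with interval arithmetic and report $\lambda(7.3)>0$, together with a bifurcation and Lyapunov plot over $\mu\in[0,9]$.
\end{itemize}

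\textbf{Parameter sensitivity.} With $\lambda>0$ established, two initial conditions $x(0)$ and $x(0)+\delta$ separate roughly like $\delta e^{\lambda n}$. After about $n\approx\ln(1/\delta)/\lambda$ iterations they are decorrelated. Since $\partial f_\mu/\partial\mu=\tfrac49x(1-x)-\tfrac19\sin(\pi x)$ is nonzero on most of the orbit, the same expansion argument applies to a perturbation in $\mu$: the orbit derivative $\partial x(n)/\partial\mu$ grows exponentially. I would support this with a differential experiment that perturbs $\mu$ or $x(0)$ by $10^{-15}$ and shows that the sorted index $\mathbb{D}$ of Step 6 changes completely. This is the property actually needed for scrambling.

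\textbf{Main obstacle.} I expect the hard step to be a rigorous proof that $\lambda(7.3)>0$ for a typical orbit, as opposed to a numerical estimate. For maps of this family, periodic windows are dense in parameter space, so positivity at one specific $\mu$ cannot come from a generic argument. My first attempt would be to verify a Misiurewicz or Collet--Eckmann condition computationally for $\mu=7.3$. This means bounding $|(f^n_\mu)'(f_\mu(1/2))|$ from below with interval arithmetic. Failing that, I would state the result as holding on a positive-measure set of parameters by Jakobson-type results, with numerical confirmation at $\mu=7.3$.
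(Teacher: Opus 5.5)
Your route is different from the paper's. The paper never treats the map as a dynamical system: it asserts $|x_n-x_n'|\approx e^{\lambda n}|\delta|$, declares $\mu\in[6,9]$ to be the chaotic region, writes down $f'$, and rests $\lambda>0$ on an unspecified numerical evaluation. Your S-unimodal framework could turn this into a real proof, and your Schwarzian claim is true: on $[0,1]$ one has $2f'_\mu f'''_\mu-3(f''_\mu)^2<0$, because $(1-2x)\cos(\pi x)\ge 0$ there. However, a computational error derails your key step. In fact $f_\mu(1/2)=\frac{4\mu}{9}\cdot\frac14+\frac{9-\mu}{9}=1$ for every $\mu$, not $1-2\mu/27$; your own upper bound, one line earlier, is attained at $x=1/2$. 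Hence $f_\mu$ maps $[0,1]$ onto itself and no modulo-1 wrapping is needed. The critical orbit is $\tfrac12\mapsto 1\mapsto 0$, ending on the fixed point $0$ with multiplier $f'_\mu(0)=\pi+\mu(4-\pi)/9\in[\pi,4]$, which is repelling uniformly in $\mu$.

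This is exactly the idea your ``main obstacle'' paragraph lacks. Because the critical value does not depend on $\mu$, this family has no periodic windows at all. Singer's theorem then rules out every attracting or neutral periodic orbit, not merely all but one. The Collet--Eckmann condition holds trivially: $|(f^n_\mu)'(f_\mu(1/2))|=|f'_\mu(1)|\,f'_\mu(0)^{n-1}=f'_\mu(0)^{n}\ge\pi^{n}$. Misiurewicz's theorem (or Collet--Eckmann's) then gives an absolutely continuous invariant measure and $\lambda>0$ for Lebesgue-almost every $x(0)$, for every $\mu\in[0,9]$. Moreover, each $f_\mu$ is topologically conjugate to the full tent map.

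As written, your plan leaves $\lambda(7.3)>0$ unproved, and both fallbacks fail. An interval-arithmetic enclosure of the single orbit of $0.9324$ swells to all of $[0,1]$ within a few dozen iterations, and a finite average certifies nothing about the limit anyway. A Jakobson-type statement would only weaken the claim to a positive-measure parameter set, and its transversality hypothesis fails because $f^2_\mu(1/2)=0$ for all $\mu$.

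With the one-line correction, your approach becomes a complete proof. It is strictly stronger than the paper's, establishing chaos on the whole range $[0,9]$ rather than only on the asserted $[6,9]$.
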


\begin{proof}
	Consider the LSCM, as defined in Eq. (\ref{eq1}), where $x(n) \in (0,1)$ is the chaotic state at iteration $n$, and $\mu \in [0,9]$ is the control parameter. 
		\begin{itemize}
			\item {Parameter Sensitivity: The LSCM exhibits strong sensitivity to initial conditions and parameters. For two nearby initial states $x_0$ and $x_0 + \delta$, the divergence satisfies below condition: 
					\[
						|x_n - x_n'| \approx e^{\lambda n} |\delta|,
					\]
			where $\lambda$ denotes the Lyapunov exponent. A positive $\lambda$ implies that small differences in $x_0$ or $\mu$ result in exponentially divergent sequences, ensuring unpredictability of the generated pseudorandom numbers.}
		\item {Chaotic Region Validation: To ensure fully chaotic behavior, the control parameter $\mu$ is selected within a specific range that maintains non-periodicity and complexity of the output. Accordingly, the system exhibits strong chaotic characteristics for $\mu \in [6,9]$. Additionally, the initial condition $x_0 \in (0,1)\setminus\{0,1\}$ prevents trivial orbits and degeneration into fixed points or short cycles.}
		\item {Lyapunov Exponent Analysis: The system's chaoticity is quantified by the Lyapunov exponent defined as 
		\[
			\lambda = \lim_{n \to \infty} \frac{1}{n} \sum_{i=0}^{n-1} \ln \left| f'(x_i) \right|,
		\]	
		where the derivative $f'(x)$ of the map (in Eq. (\ref{eq1})) is defined as: 
		\[
			f'(x) = \frac{4 \mu}{9} \cdot (1 - 2x) + \frac{9 - \mu}{9} \cdot \pi \cdot \cos(\pi x).
		\]
		For $\mu \in [6,9]$ within the chaotic region, numerical evaluation yields $\lambda > 0$, confirming the system’s chaotic nature and suitability for cryptographic applications.}
		\end{itemize}
		This analysis verifies that the LSCM provides high sensitivity, validated chaotic behavior, and positive Lyapunov exponent ensuring secure pseudorandom sequence generation for coefficient scrambling in the proposed H.265/HEVC syntax element encryption scheme.	
\end{proof}
\subsection{Theoretical Proof for Information Entropy Analysis}\label{sec:ta3}
\begin{theorem}
	The proposed Chaotic Map-Based Coefficient Scrambling Scheme for H.265/HEVC significantly increases the information entropy of syntax elements, resulting in enhanced randomness and resistance against statistical and entropy-based attacks.
\end{theorem}

\begin{proof}
	For a discrete random variable $X$ with possible outcomes $\{x_1, x_2, \dots, x_n\}$ and corresponding probabilities $P(X) = \{p_1, p_2, \dots, p_n\}$, the Shannon entropy $H(X)$ is computed using Eq. (\ref{eq18}). 
	\begin{itemize}
		\item {Pre-Encryption Entropy: 	Let $C = \{c_1, c_2, \dots, c_n\}$ represent the set of residual coefficients before encryption (i.e. original), following a Laplacian-like distribution:
		\[
		P(c) = \frac{1}{2b} \exp\left(-\frac{|c|}{b}\right)
		\]
		which implies a higher probability of small coefficient values (0, $\pm$1), and lower probabilities for larger values. This distribution is peaked around zero, resulting in lower entropy. Therefore, the entropy of $C$ satisfies:
		\[
		H(C) < H_{\max}= \log_2 n
		\]
			For 8-bit coefficients ($n=256$), $H_{\max} = 8$ bits/symbol. Empirical HEVC data yields $H(C) \approx 5.5$ bits/symbol i.e. commonly in range of 5 to 6 bits/symbol.
		}
		\item {Post-Encryption Entropy: After applying the chaotic map-based scrambling, two possible scenarios are:
		\begin{itemize}
			\item {\textit{Permutation-only Scrambling:} Histogram remains unchanged; Entropy $H(C') \approx H(C)$; Spatial correlation destroyed}.
			
			\item {\textit{Chaotic Value Masking:} 
				Each coefficient is modified as: 
			\[
			c_i' = c_i \oplus k_i
			\]
			where $k_i$ is drawn from a pseudo-random sequence generated by the Logistic-Sine Chaotic Map (LSCM). This produces an approximately uniform distribution:
			\[
			P'(x) \approx \frac{1}{2^b}, \quad b=8
			\]
			yielding:
			\[
			H(C') \approx 8 \ \text{bits/symbol}.
			\]}
		\end{itemize} }
	\item {Theoretical Entropy Gain: The entropy gain $\Delta H$ satisfies:
		\[
		\Delta H = H(C') - H(C) \approx 8.0 - 5.5 = 2.5 \ \text{bits/symbol},
		\]
		demonstrating a significant increase in randomness and unpredictability.}		
	\end{itemize}
	Due to this entropy enhancement, the encrypted coefficients closely resemble uniformly random data, reducing the risk of entropy-based and statistical attacks. Thus, the scheme improves security from an information-theoretic perspective.
\end{proof}
\subsection{Security Proof}\label{sec:ta4}
\begin{theorem}
	The proposed Syntax Element Encryption for H.265/HEVC using Logistic-Sine Chaotic Map (LSCM) and SHA-384 key update mechanism ensures information-theoretic security, including large key space, high entropy, strong confusion and diffusion, and resistance against Chosen-Plaintext Attacks (CPA).
\end{theorem}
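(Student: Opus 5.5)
The plan is to prove the theorem as a conjunction of four properties. Each is checked separately, mostly by citing the earlier theorems of this section, and then the four are combined.

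\emph{Key space.} The secret material is the 256-bit initial key and the 128-bit initial vector used in Eqs. (\ref{eq1a})--(\ref{eq1b}). The chaotic parameters $(\mu, x(0))$ add to this, with $\mu\in[6,9]$ in the chaotic region established in Section \ref{sec:ta2} and $x(0)\in(0,1)$. Counting only the AES material gives $|\mathcal{K}|\ge 2^{384}$. Under IEEE-754 double precision (about $10^{-15}$ resolution per real parameter), the chaotic parameters contribute roughly $2^{100}$ more. Both figures are far above the usual $2^{128}$ threshold, so exhaustive search is infeasible.

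\emph{Entropy, confusion and diffusion.} For entropy, I would invoke the theorem of Section \ref{sec:ta3} directly. The sign bits and suffixes are XORed with AES-CTR outputs $\mathbb{S}_i$ as in Eqs. (\ref{eq3})--(\ref{eq9}). For each such field, the ciphertext bits are then uniform over their (fixed-length) range whenever $\mathbb{S}_i$ is uniform. This is the one-time-pad argument, applied field by field, and it holds conditioned on the plaintext. Confusion comes from this key-dependent substitution together with the modular shift of luma\_IPM in Eq. (\ref{eq10}). Diffusion comes from the position permutation induced by the sorted index $\mathbb{D}$ of the chaotic values (Steps 5--11). Its sensitivity follows from the positive Lyapunov exponent of Section \ref{sec:ta2}: a one-bit change in slice\_key moves the start index $\mathbb{B}$, and through SHA-384 a one-bit change in the slice header changes about half of slice\_key and slice\_IV. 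That avalanche is what gives plaintext/key sensitivity at the slice level.

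\emph{CPA resistance.} This is exactly the theorem of Section \ref{sec:ta1}, and I would simply cite it. I would also note why the SHA-384 update matters: two chosen plaintexts with different slice headers are encrypted under independent-looking $(\text{slice\_key},\text{slice\_IV})$, by the PRF assumption on the hash. Identical headers still receive fresh keystream because the counter is incremented.

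\emph{Main obstacle.} The hardest part is the phrase ``information-theoretic security''. The keystream comes from AES-CTR and a deterministic chaotic map, so it is only computationally pseudorandom, and keys are shorter than messages. Shannon's perfect-secrecy bound $|\mathcal{K}|\ge|\mathcal{M}|$ therefore fails. Permutation-only scrambling also leaks the coefficient histogram, as already noted in Section \ref{sec:ta3}. My first attempt would be to prove a conditional statement: information-theoretic (one-time-pad) secrecy of each XOR-encrypted field \emph{in the ideal model} where $\mathbb{S}_i$ is truly uniform. I would then transfer it to the real scheme via a hybrid argument whose distance is bounded by the AES-CTR PRF advantage plus the CMPA advantage. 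Concretely, the final theorem would hold in the sense ``information-theoretic in the ideal model, computationally indistinguishable in the real one''. I would also state explicitly that the coefficient multiset is not hidden.
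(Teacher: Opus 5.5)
Your checklist is the paper's own: key space, entropy, confusion and diffusion, CPA, and a closing secrecy claim. Two of your departures improve on it. First, you credit the $384$ bits to the secret initial key and vector. The paper instead credits them to the SHA-384 output of a slice header that is transmitted unencrypted, so that hash adds no secret entropy. Second, you rightly refuse the paper's closing step $I(M;C)\approx 0$. Nothing can deliver it: $H(M\mid C)\le H(K)$ forces $I(M;C)\ge H(M)-H(K)$, and most syntax elements are sent in the clear anyway.

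The genuine gap is the CPA leg, and it also breaks the hybrid you use to transfer the ideal-model result.

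\begin{itemize}
\item Your claim that identical headers ``still receive fresh keystream because the counter is incremented'' is false. Eqs. (\ref{eq1a})--(\ref{eq1b}) are re-evaluated at every slice. The counter restarts from the new slice\_IV, and $\mathbb{B}$ is re-read from slice\_key (Step 12). That reset is exactly what keeps a slice decryptable after earlier slices are lost.
\item Consequently, the encryption of a slice is a deterministic function of (initial key, initial vector, slice). Equal headers mean a reused keystream, i.e.\ a two-time pad. No deterministic scheme is IND-CPA secure: query the oracle on $P_0$, submit $(P_0,P_1)$ as the challenge, and compare.
\item You also cannot simply cite the theorem of Section \ref{sec:ta1}. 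Its game on coefficient sets is won with certainty by choosing $C_0$ and $C_1$ with different multisets of values. Your own remark that the multiset is not hidden already concedes this. The paper's proof of the present theorem only asserts this leg, so it offers no repair.
\end{itemize}

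What can actually be proved is narrower. It is a nonce-respecting statement (pairwise distinct slice headers) for the XOR-masked bits only. The unencrypted fields, the coefficient multiset and luma\_IPM must be declared as leakage. For luma\_IPM, a uniform $6$-bit mask reduced modulo $35$ is biased. Even in this restricted form, the step from AES-CTR output to uniform bits is not bounded by the ordinary PRF advantage. The per-slice keys are the initial key XORed with a hash of a public header, i.e.\ related keys with known differences. You need an RKA-PRF assumption for XOR relations, or an ideal-cipher model.

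Two smaller corrections.

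\begin{itemize}
\item Invoking the theorem of Section \ref{sec:ta3} for entropy adds nothing. Its gain comes from a chaotic XOR mask on coefficient values that the scheme never applies. Its own permutation-only case gives $H(C')\approx H(C)$. Your field-by-field one-time-pad argument is what actually carries the entropy claim.
\item The Lyapunov exponent is not the source of key sensitivity. The parameters $\mu$ and $x(0)$ are fixed in Step 1, so they should not enter $|\mathcal{K}|$ either unless you make them secret. Moving $\mathbb{B}$ by $\delta<\mathbb{N}$ merely slides a window over a fixed table. The $\mathbb{N}-\delta$ shared values keep their relative order, so the new $\mathbb{D}$ is largely a translate of the old one.
\end{itemize}

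All sensitivity comes from SHA-384 and AES. It is sensitivity to the header and key, not diffusion of the slice data: flipping one plaintext MVD sign changes exactly one encrypted sign bin.
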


\begin{proof}
	Key Space: With $x_0 \in (0,1)$ and $\mu \in [6,9]$ at $10^{-15}$ precision, and SHA-384 key updates ($2^{384}$ possibilities), the total key space exceeds $2^{300}$, far above the $2^{128}$ security benchmark. \\
	Entropy: Encrypted syntax elements approach a uniform distribution:
		\[
		H(X) \approx \log_2 N,
		\]
		ensuring high unpredictability and randomness. \\
	Confusion and Diffusion: LSCM provides nonlinearity (confusion), while dynamic SHA-384 key updates propagate changes across syntax elements (diffusion), ensuring the avalanche effect. \\
	CPA Resistance: Assuming LSCM outputs are computationally indistinguishable from random and SHA-384 behaves as a pseudo-random function, the scheme achieves security. \\
	Mutual Information:  
		\[
		I(M;C) = H(M) - H(M|C) \approx 0,
		\]
		implying ciphertext reveals negligible information about plaintext. \\
	The scheme satisfies key cryptographic criteria: large key space, high entropy, strong confusion and diffusion, and CPA resistance.
\end{proof}
\section{Operational Design and Performance Metrics} \label{sec:opd}
\subsection{Operational Design}
The operational summary for the proposed scheme is described in Algorithm \ref{algo:alg1}.

\begin{algorithm}[!ht]
\caption{Proposed Scheme: Operational summary}
	\label{algo:alg1}
\textbf{Input} chroma\_IPM, luma\_IPM, and en\_luma\_IPM \\
\textbf{Output} updated chroma\_IPM \\
\textbf{Begin:} \\
\If{\text{chroma\_IPM}=34 \text{and} \text{luma\_IPM} $\ne$ \text{en\_luma\_IPM}}{chroma\_IPM=luma\_IPM}
\ElseIf{chroma\_IPM = en\_luma\_IPM}{chroma\_IPM=34}
\Else{no action}
\textbf{end if} \\
\textbf{End}
\end{algorithm}
\subsection{Performance Metrics}
The two indicators peak signal-to-noise ratio (PSNR) and structural similarity (SSIM) are computed for further evaluation. The PSNR value is computed via Eq. (\ref{eq11}).
\begin{equation}
	\label{eq11}
	PSNR(\mathbb{X},\mathbb{Y})=10\log_{10}\frac{(2^{\mathcal{N}}-1)^2}{MSE}
\end{equation}
where $\mathcal{N}$ represents the grayscale level of the image. This scheme assumes that $\mathcal{N}=$ 8, which is an 8-bit grayscale image. The mean square error (MSE) is expressed by Eq. (\ref{eq12}).
\begin{equation}
	\label{eq12}
	MSE= \frac{1}{\mathcal{H}\times\mathcal{W}}\sum_{i=1}^{\mathcal{H}}\sum_{j=1}^{\mathcal{W}}\biggl(\mathbb{X}(i,j)-\mathbb{Y}(i,j)\biggr)^2
\end{equation}
where $\mathcal{H}$ and $\mathcal{W}$ define the height of the image and the width of the image, respectively. The PSNR is measured in dB. A lower PSNR value indicates a lower similarity between the two frames. The SSIM values are computed via Eq. (\ref{eq13}).
\begin{equation}
	\label{eq13}
	SSIM(\mathbb{X},\mathbb{Y})=\frac{(2\mu_{\mathbb{X}}\mu_{\mathbb{Y}}+{\mathbb{C}_1})(2\sigma_{\mathbb{X}\mathbb{Y}}+{\mathbb{C}_2})}{(\mu_{\mathbb{X}}^2+\mu_{\mathbb{Y}}^2+{\mathbb{C}_1})(\sigma_{\mathbb{X}}^2+\sigma_{\mathbb{Y}}^2+{\mathbb{C}_2})}
\end{equation}
where ${\mathbb{C}_1}=(K_1L)^2$, ${\mathbb{C}_2}=(K_2L)^2$, the default value of $K_1$ is 0.01, the default value of $K_2$ is 0.03, and $L$ denotes the maximum grayscale value of the image; for an 8-bit grayscale image, $L$ is 255. $\mu_{\mathbb{X}}$, $\sigma_{\mathbb{X}}$, and $\sigma_{\mathbb{X}\mathbb{Y}}$ represent the mean, standard deviation, and covariance, respectively, as computed via Eqs. (\ref{eq14}-\ref{eq16}).
\begin{gather}
	\label{eq14}
\mu_{\mathbb{X}}=\frac{1}{m}\sum_{i=1}^{m}\mathbb{X}_i \\
	\label{eq15}
\sigma_{\mathbb{X}}= \frac{1}{m-1}\sum_{i=1}^{m} \sqrt{ \biggl(\mathbb{X}_i - \mu_{\mathbb{X}} \biggr)^2} \\
	\label{eq16}
\sigma_{\mathbb{X}\mathbb{Y}}= \frac{1}{m-1}\sum_{i=1}^{m} \biggl(\mathbb{X}_i - \mu_{\mathbb{X}} \biggr)\biggl(\mathbb{Y}_i - \mu_{\mathbb{Y}} \biggr)
\end{gather}
where $m$ is the total number of pixels in the image. The computations for $\mu_{\mathbb{Y}}$ and $\sigma_{\mathbb{Y}}$ are similar to those for $\mu_{\mathbb{X}}$ and $\sigma_{\mathbb{X}}$. The SSIM value ranges between [0, 1]. A smaller value indicates a lower similarity between the two frames.

Another important metric, the edge difference ratio (EDR) indicator required for performance evaluation, is computed via Eq. \eqref{eq17}.
\begin{equation}
	\label{eq17}
	EDR(\mathbb{X},\mathbb{Y})= \frac{\sum_{i=1}^{\omega}|E_{\mathbb{X}}(i)-E_{\mathbb{Y}}(i)|}{\sum_{i=1}^{\omega}|E_{\mathbb{X}}(i)+E_{\mathbb{Y}}(i)|}
\end{equation}
where $\omega$ represents the total number of pixels in the edge frame. The value of the EDR ranges from 0 to 1. A higher value indicates a greater difference between the two edge frames.

The encryption operation may increase the file size of the encrypted video, so the bit rate overhead is also one of the considerations when choosing to encrypt the video. The formula for bit rate overhead is expressed in Eq. \eqref{eqbro}.
\begin{equation}
	\label{eqbro}
	BRO= \frac{ES-OS}{OS}\times 100\%
\end{equation}
where $ES$ is the file size of the encrypted video and where $OS$ is the file size of the original video. The unit of bitrate overhead is the percentage ($\%$), and the ideal value is $0\%$, which means a constant bitrate.

The average Hamming distance ($HD_{avg}$) to calculate the difference between two sets of random sequences and to evaluate the resistance to chosen plaintext attacks is computed via Eq. \eqref{eq:pta}.
\begin{equation}
	\label{eq:pta}
	HD_{avg}= \frac{\sum_{i=1}^{K_L}\mathbb{S}_x(i)\oplus \mathbb{S}_y(i)}{K_L}
\end{equation}
where $K_L$ represents the minimum length of the two sets of random sequences and where $\mathbb{S}_x (i)$ and $\mathbb{S}_y (i)$ represent the ith values of the first and second sets of random sequences, respectively. The value of $HD_{avg}$ is between 0 and 1. The closer its value is to 0.5, the greater the difference between the two sets of random sequences is, and it can effectively resist chosen plaintext attacks.

The information entropy is computed via Eq. \eqref{eq18}.
\begin{equation}
	\label{eq18}
	\mathbb{H}(\mathbb{X})=-\sum_{i=0}^{\mathcal{N}}P(\mathbb{X}_i)\times\log_2P(\mathbb{X}_i)
\end{equation}
where $\mathcal{N}$ represents the maximum grayscale value of the image. For an 8-bit grayscale image, $\mathcal{N}$ is 255. $P(\mathbb{X}_i)$ represents the probability of occurrence of frame $\mathbb{X}$ when the pixel value is $i$. For an 8-bit grayscale image, the maximum value of information entropy is 8. A value of information entropy closer to 8 reflects a higher degree of randomness. 

Furthermore, widely used to test the robustness of image cryptosystems against differential cryptanalysis, the number of pixels changing rate (NPCR) and unified average change intensity (UACI) are computed via Eqs. (\ref{eq19}--\ref{eq21}).
\begin{gather}
	\label{eq19}
NPCR= \frac{\sum_{u, v} O(u, v)}{\mathcal{H}\times\mathcal{W}}\times 100\% \\
	\label{eq20}
O(u, v)= \begin{cases}
0, \textit{if} \quad C_1(u, v) = C_2(u,v) \\
1, \textit{if}\quad C_1(u, v) \neq C_2(u,v)
\end{cases} \\
	\label{eq21}
UACI= \frac{1}{\mathcal{H}\times\mathcal{W}}\bigl[\sum_{u, v} \frac{C_1(u, v) - C_2(u,v)}{2^L-1} \bigr]\times 100\%
\end{gather}
where $ C_1(u, v)$ and $C_2(u, v)$ represent the pixel values of two cipher frames at position $(u, v)$. The expected values of the NPCR and UACI are computed via Eqs. \eqref{eq22} and \eqref{eq23}, respectively.
\begin{gather}
	\label{eq22}
NPCR^{\dagger}= (1-2^{-L})\times100\%  \\
		\label{eq23}
UACI^{\dagger}= \frac{1}{2^{2L}}\frac{\sum_{u=1}^{2^L-1}u(u+1)}{2^L-1}\times100\%
\end{gather}
where $L=$8 is the length of the pixel values. 

\section{Performance evaluation} \label{sec:res}
\subsection{Experimental Setup}
The experimental works are conducted on a machine comprising an environment configuration, as depicted in Table \ref{tab:1}. The profile used by HM17.0 \cite{c20} is encoder\_randomaccess\_main, in which the QP is set to 24, the number of \textit{encoding frames} (FramesToBeEncoded) is set to 100, and \textit{SideHideFlag} is set to false. Additionally, \textit{AES initial key} is set to \{0x04, 0xB0, 0x21, 0x90, 0xCB, 0xA6, 0xAF, 0xF6, 0x94, 0xFD, 0xAA, 0x9B, 0xEB, 0xEE, 0xA6, 0xCD, 0xF1, 0xF2, 0xF1, 0xBC, 0xCE, 0x17, 0x45, 0xE2, 0x39, 0x57, 0x68, 0x8A, 0xCB, 0x8A, 0x21, 0x5A\}, a total of 256 bits, the \textit{initial vector} is set to \{0x17, 0x3B, 0xFC, 0xD8, 0x7B, 0x33, 0x0D, 0x30, 0x0D, 0xDB, 0x12, 0x65, 0xAF, 0xA9, 0x47, 0x52\}, with a total of 128 bits.
\begin{table}[!htbp]
	\caption{Experimental environment configuration}\label{tab:1}
	\centering
	\resizebox{0.9\columnwidth}{!}{
		\tiny
		\begin{tabular}{|p{50pt}|p{70pt}|}
			\hline 
			\textbf{Item} & \textbf{Specification} \\ \hline 
			Operating System &	Microsoft Windows 10 64bit \\  \hline 
			CPU &	Intel(R) Core(TM) i7-4790 CPU @ 3.60GHz \\  \hline 
			RAM	& 24.0 GB \\  \hline 
			Software &	Microsoft Visual Studio 2019, MATLAB \\  \hline 
			Programming Language &	C++ \\  \hline 
			Library	& Crypto++ \\  \hline 
			H.265 Codec & HM 17.0 \\  \hline
	\end{tabular}}
	
\end{table}

The simulation and experiments employed a 12-video sequence with resolutions ranging from 352$\times$288 to 2560$\times$1600 \cite{c21}, \cite{c22}, as shown in Table \ref{tab:2}, and were compared with the robust RSVE scheme \cite{c13-9903590}.
\begin{table}[!htbp]
	\caption{Video sequence}\label{tab:2}
	\centering
	\resizebox{0.8\columnwidth}{!}{
		\tiny
		\begin{tabular}{|l|c|c|}
			\hline 
			\textbf{Video} & \textbf{Resolution} & \textbf{Frame Rate}\\ \hline 
			Mobile & 352$\times$288 & 30fps\\  \hline 
			Foreman	&352 $\times$ 288	&30 fps \\  \hline 
			BlowingBubbles	& 416 $\times$ 240	& 50 fps \\  \hline 
			RaceHorses	&416 $\times$240&	30 fps\\  \hline 
			PartyScene	&832$\times$480&	50 fps\\  \hline 
			BasketballDrillText	&832 $\times$ 480&	60 fps \\  \hline 
			Johnny	&1280$\times$720	&60 fps \\  \hline 
			KristenAndSara	&1280$\times$720	&60 fps \\  \hline 
			BasketballDrive	&1920$\times$1080	&50 fps \\  \hline 
			BQTerrace&	1920 $\times$1080&	60 fps \\  \hline 
			Traffic	&2560$\times$1600&	30 fps \\  \hline 
			PeopleOnStreet	&2560$\times$1600&	30 fps \\  \hline 
	\end{tabular}}
	
\end{table}
\subsection{Experimental Results}
Fig. \ref{figexp} shows that compared with the original frame (column 1st), the encrypted frame (column 2nd) has difficulty distinguishing the original authenticity in terms of texture and outline. In addition, the decrypted frames (column 3rd) are identical to the original frames. It indicates the proposed scheme can significantly affect the subjective vision of the video content. 
\begin{figure*}[!ht]
	\centering
	\includegraphics[width=0.8\textwidth]{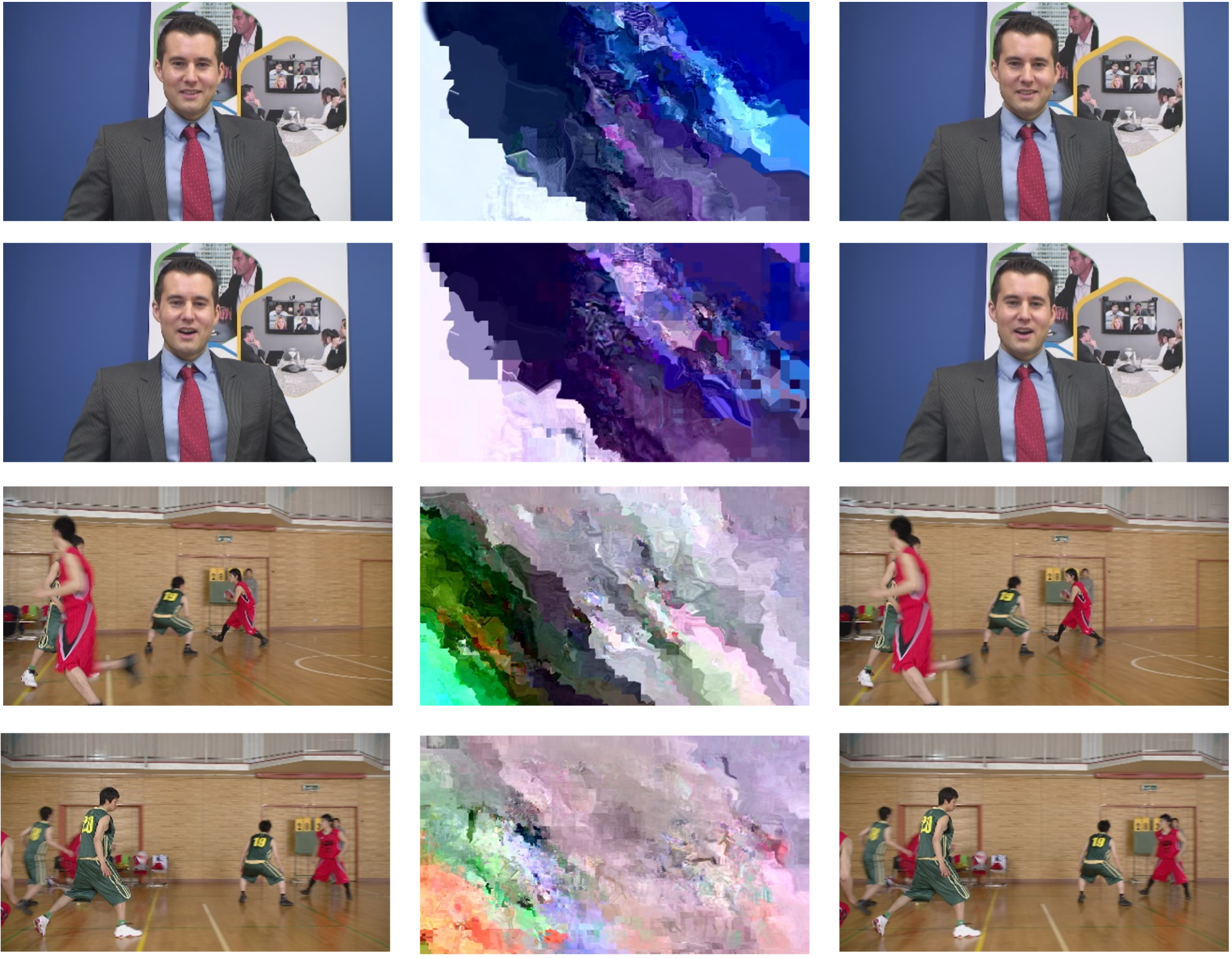}
\caption{Experimental results for the Johnny and BasketballDrive video sequences. The first and second rows are the Johnny video sequence (\#1 and \# 30 frame), and the third and fourth rows are the BasketballDrive video sequence (\#1 and \# 30 frame) with the proposed scheme.}
	\label{figexp}
\end{figure*}
\subsection{Slice Analysis}
Fig. \ref{fig:slice} shows the slice analysis for the Foreman video sequence. The video loses the 11th to 14th slices, and the decrypted content of the 15th frame is the same as that of the original frame (18th frame). The first and second columns are the results of the Foreman video in the 1st frame (the corresponding frame is also the 1st frame) and the 15th frame (the corresponding frame is the 18th frame). Starting from the 1st frame as a normal decrypted frame, some slices are lost along the way, and the subsequent decrypted frames are still exactly the same as the original corresponding frames. 
\begin{figure}[!ht]
	\centering
	\includegraphics[width=0.9\columnwidth]{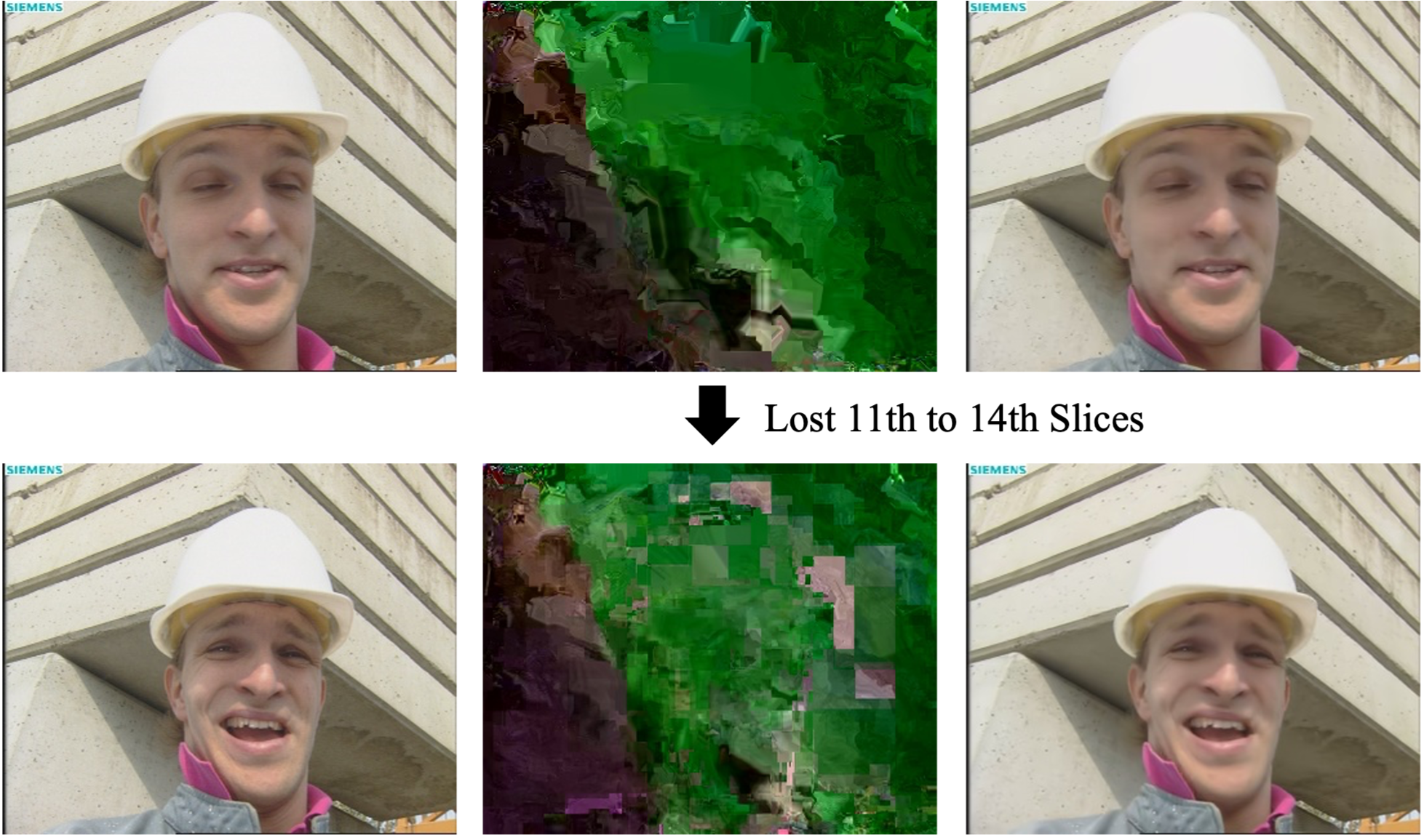}
\caption{Slice analysis: Foreman video original frame, encrypted frame and decrypted frame.}
	\label{fig:slice}
\end{figure}

Furthermore, the PSNR and SSIM indicators for the Foreman video in Table \ref{tab:slice} show that the 15th decrypted frame and the original corresponding frame (18th frame) are the same. Thus, the random sequence generated by the proposed scheme is synchronized with the slice. 
\begin{table}[!htbp]
	\caption{Slice analysis: PSNR and SSIM results}\label{tab:slice}
	\centering
	\resizebox{1.0\columnwidth}{!}{
			\begin{tabular}{|l|c|c|c|c|c|c|}
				\hline 
				\textbf{Video} &\multicolumn{3}{c|}{\textbf{PSNR}} & \multicolumn{3}{c|}{\textbf{SSIM}} \\ 
				\cline{2-7} \textbf{Frame} &\textbf{Original} &\textbf{Encrypt} &\textbf{Decrypt} &\textbf{Original} &\textbf{Encrypt} &\textbf{Decrypt} \\ \hline 
				1 (1)	&40.1872&	5.8797	&40.1872	&0.9633	&0.2207&	0.9633 \\ \hline
				15 (18)&	37.7070	&6.6155	&37.7070&	0.9477	&0.2138	&0.9477 \\ \hline
	\end{tabular}}
\end{table}
\subsection{Subjective Vision Analysis}
The outcomes for the subjective vision analysis are presented in Fig. \ref{fig:11}. The proposed method significantly impacts the subjective vision of the video content compared with the original frame. Additionally, the decrypted frame is identical to the original frame, confirming the correctness of the decryption. Furthermore, the proposed method causes the same level of distortion as the RSVE scheme does, indicating that it does not affect the subjective distortion levels of the video content.
\begin{figure*}[!ht]
	\centering
	\includegraphics[width=0.95\textwidth]{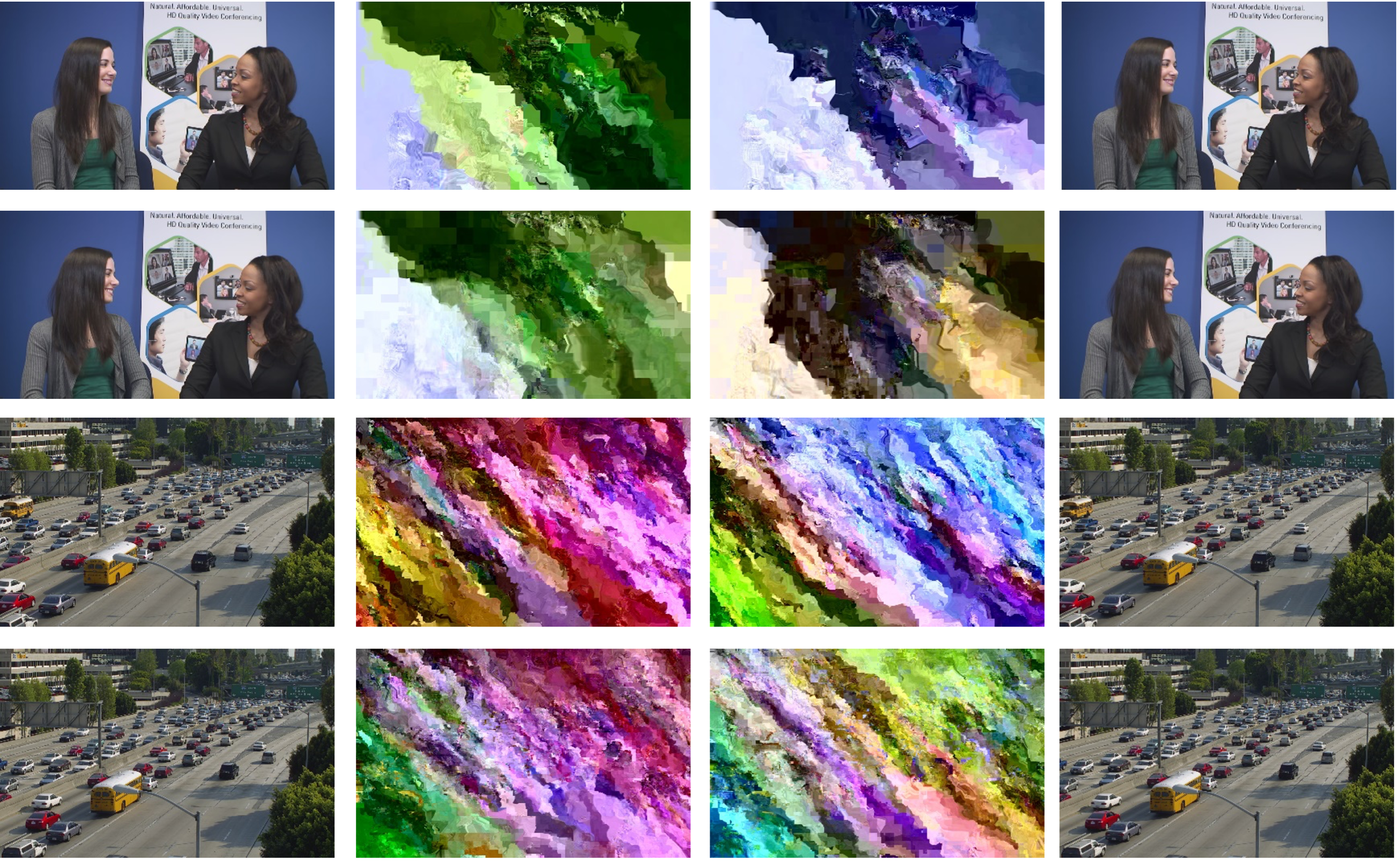}
\caption{Experimental results for subjective vision analysis. The first and second rows are the KristenAndSara video sequence (\#1 and \# 30 frame), and the third and fourth rows are the Traffic video sequence (\#1 and \# 30 frame). The first to fourth columns are the original frame, the RSVE scheme, the proposed scheme, and the decrypted frame, respectively.}
	\label{fig:11}
\end{figure*}
\subsection{Objective Indicator Analysis}
The evaluation uses two indicators: the peak signal-to-noise ratio (PSNR) and structural similarity (SSIM). As shown in Table \ref{tab:3}, all the results are averaged over 100 frames for PSNR and SSIM. Notably, the proposed scheme shows significantly lower PSNR values and performs second in SSIM when compared to the state-of-the-art methods like Jovanovic et al. \cite{c24}, Sallam et al. \cite{c23-8119905}, and the RSVE scheme of Chan et al. \cite{c13-9903590}. This indicates that the proposed scheme can produce the best-distorted image, but the RSVE method provides better outcomes in terms of visual perception. 
\begin{table*}[!htbp]
	\caption{PSNR and SSIM results for various encryption algorithms}\label{tab:3}
	\centering
	\resizebox{0.99\textwidth}{!}{
		\begin{tabular}{|l|c|c|c|c|c|c|c|c|c|c|c|}
			\hline 
			\multirow{3}{*}{\textbf{Video Sequence}} & \multirow{3}{*}{\textbf{QP}} &\multicolumn{5}{c|}{\textbf{PSNR}} &\multicolumn{5}{c|}{\textbf{SSIM}}\\ 
			\cline{3-12} &&\multirow{2}{*}{\textbf{Original}} &\textbf{Jovanović} &\textbf{Sallam} &\textbf{RSVE}&\multirow{2}{*}{\textbf{Proposed}} &\multirow{2}{*}{\textbf{Original}} &\textbf{Jovanović} &\textbf{Sallam} &\textbf{RSVE}&\multirow{2}{*}{\textbf{Proposed}} \\ 
			&&&\textbf{\cite{c24}} &\textbf{\cite{c23-8119905}} &\textbf{\cite{c13-9903590}}& &&\textbf{\cite{c24}} &\textbf{\cite{c23-8119905}} &\textbf{\cite{c13-9903590}}& \\
			\hline 
			\cline{3-12}\multirow{3}{*}{Mobile} &12 &51.2080 &10.8199 &10.1993 &9.9067	&\textbf{9.3520} &0.9984 &0.0948 &0.0741 &\textbf{0.0592} &0.0819  \\
			\cline{3-12} &24 &40.9997 &11.3320 &11.0287 &9.9786 &\textbf{9.9557} &0.9889 &0.0957 &0.1092 &\textbf{0.0663} &0.0914 \\
			\cline{3-12} &36 &32.9772 &10.2106 &10.4236 &\textbf{9.5058} &10.4717 &0.9313 &0.1098 &0.1446 &\textbf{0.0738} &0.1095 \\  \hline
			\cline{3-12}\multirow{3}{*}{Foreman} &12 &51.5915 &12.2185 &13.4685 &11.1852 &\textbf{10.5552} &0.9961 &0.3556 &0.3765 &\textbf{0.2381} &0.2860 \\
			\cline{3-12} &24 &44.4308 &13.8540 &11.9037 &12.4377  &\textbf{7.9867}	&0.9712 &0.3005 &0.3830 &\textbf{0.2279} &0.2514 \\
			\cline{3-12} &36 &38.4626 &13.0729 &13.8619 &12.7978  &\textbf{8.3192} &0.9015 &0.4150 &0.4163 &0.3437 & \textbf{0.2879}\\  \hline
			\cline{3-12}\multirow{3}{*}{BlowingBubbles} &12 &51.0548& 13.1237 &13.5211 &\textbf{10.7650} & 11.1731	&0.9977 &0.1812 &0.1887 &\textbf{0.1693} &0.1935 \\
			\cline{3-12}&24	&41.5218 &13.0176 &13.9822 &11.4384	&\textbf{10.9450} &0.9701 &0.1446 &0.2549 &\textbf{0.1697} & 0.1946 \\
			\cline{3-12} &36 &34.7572 &12.2965 &12.5875 &10.7501 &\textbf{10.4957} &0.8548 &0.2278 &0.2706 &\textbf{0.1648} &0.2085 \\  \hline
			\cline{3-12}\multirow{3}{*}{RaceHorses}	&12	&51.3729 &11.2945 &11.9913 &\textbf{10.0016} &10.8395 &0.9975 &0.2124 &0.2068 &\textbf{0.1749} &0.2097 \\
			\cline{3-12}&24	&42.0433 &11.9281 &11.0942 &\textbf{10.7121} &11.1435 &0.9815 &0.2025 &0.2412 &\textbf{0.1730}  &0.1991 \\
			\cline{3-12} &36 &34.7851 &12.0602 &11.5157 &10.8476 &\textbf{9.3066} &0.8797 &0.2498 &0.2985 &\textbf{0.1797} &0.2286 \\  \hline
			\cline{3-12}\multirow{3}{*}{PartyScene}	&12	&51.0801 &13.6093 &13.5525 &11.9879	&\textbf{10.7588} &0.9984 &0.1160 &0.1481 &\textbf{0.1099} &0.1380 \\
			\cline{3-12} &24 &40.9438 &13.3090 &13.1844 &11.9118 &\textbf{10.3023} &0.9785 &0.1339 &0.1712 &\textbf{0.1054} &0.1504 \\
			\cline{3-12} &36 &33.7484 &13.3780 &13.1568 &11.7889 &\textbf{10.3876}	&0.8724 &0.1466 &0.2068 &\textbf{0.1279} &0.1498 \\  \hline
			\cline{3-12}\multirow{3}{*}{BasketballDrillText} &12 &51.3380 &11.4313 &11.9658 &\textbf{11.2467} &11.2885 &0.9965 &0.3335 &0.3466 &\textbf{0.3023} &0.3441  \\
			\cline{3-12} &24 &42.9425 &12.0454 &12.3082 &\textbf{9.9141} &10.4729 &0.9658 &0.3936 &0.4042 &\textbf{0.2728} &0.3514 \\
			\cline{3-12} &36 &36.3617 &11.6326 &11.9429 &\textbf{10.2188} &11.5484 &0.8780 &0.4068 &0.4683 &\textbf{0.3670} &0.4133 \\  \hline
			\cline{3-12}\multirow{3}{*}{Johnny}	&12	&51.8393 &14.0898 &14.3128 &13.2773	&\textbf{8.6507}	&0.9944 &0.4925 &0.5124 &0.4359	&\textbf{0.4259} \\
			\cline{3-12} &24 &46.6292 &14.2316 &13.6275 &12.9926 &\textbf{7.7676} &0.9725 &0.4706 &0.4471 &\textbf{0.3682} &0.4208 \\
			\cline{3-12} &36 &41.9026 &14.3882 &14.9000 &13.5148 &\textbf{9.1673} &0.9428 &0.5706 &0.4749 &\textbf{0.3905} &0.5031 \\ \hline
			\cline{3-12}\multirow{3}{*}{KristenAndSara}	&12 &51.8475 &15.7376 &15.0158 &14.0696 &\textbf{6.6476} &0.9944 &0.4474 &0.4690 &0.4022 &\textbf{0.3665} \\
			\cline{3-12} &24 &46.4661 &15.7631 &15.3568 &14.7831 &\textbf{6.8164}	&0.9765 &0.4253 &0.4882 &\textbf{0.3939} &0.4072 \\
			\cline{3-12} &36 &41.4027 &15.1599 &15.7244 &13.9391 &\textbf{9.8735}	&0.9517 &0.4779 &0.4899 &\textbf{0.4221}	&0.4959 \\  \hline
			\cline{3-12}\multirow{3}{*}{BasketballDrive} &12 &51.0072 &14.8807 &14.4820 &14.7322	&\textbf{10.9528} &0.9951 &0.5278 &0.5424 &\textbf{0.3372} &0.4328 \\
			\cline{3-12} &24 &44.0864 &13.2471 &14.9034 &14.5187 &\textbf{11.1040} &0.9419 &0.4941 &0.5185 &\textbf{0.3726} &0.4826 \\
			\cline{3-12} &36 &40.1487 &14.8385 &14.6580 &14.3899 &\textbf{11.3314} &0.8975 &0.5785 &0.5943 &\textbf{0.4335} &0.5282 \\  \hline
			\cline{3-12}\multirow{3}{*}{BQTerrace}	&12	&51.5661 &13.1145 &13.9256 &12.4986 &\textbf{8.6248} &0.9975 &0.2110 &0.2696 &\textbf{0.1761} &0.2170 \\
			\cline{3-12} &24 &42.7011 &15.5179 &16.8617 &14.2621 &\textbf{8.4283}	&0.9735 &0.2548 &0.2863 &\textbf{0.1904} & 0.2510 \\
			\cline{3-12} &36 &37.0627 &16.5752 &16.7083 &15.9872 &\textbf{8.6667} &0.8937 &0.3156 &0.2922 &\textbf{0.2348} &0.2651 \\  \hline
			\cline{3-12}\multirow{3}{*}{Traffic}	&12	&51.1566 &12.5901 &11.6006 &11.1289	&\textbf{9.9424}	&0.9960 &0.3295 &0.3392 &\textbf{0.2888}	&0.2919 \\
			\cline{3-12} &24 &43.0685 &11.5806 &11.1236 &11.0540 &\textbf{8.9662} &0.9774 &0.3464 &0.3574 &\textbf{0.2725} &0.3204 \\
			\cline{3-12} &36 &37.4918 &13.9138 &13.6104 &12.9100 &\textbf{9.3449} &0.9193 &0.3573 &0.3852 &\textbf{0.3075} &0.3286 \\  \hline
			\cline{3-12}\multirow{3}{*}{PeopleOnStreet}	&12	&51.3575 &13.2091 &12.6736 &12.4602 &\textbf{9.1224} &0.9965 &0.2911 &0.3119 &0.2442 &\textbf{0.2397} \\
			\cline{3-12} &24 &44.2459 &13.8099 &13.3416 &12.9737 &\textbf{9.2265} &0.9774 &0.3171 &0.3422 &\textbf{0.2437} &0.2455 \\
			\cline{3-12} &36 &39.4948 &13.7879 &12.6562 &11.8591 &\textbf{9.3852} &0.9137 &0.3591 &0.3572 &0.2943 &\textbf{0.2605} \\  \hline
	\end{tabular}}
	
	\footnotesize{PSNR: Peak Signal-to-Noise Ratio, QP: Quantization Parameter, RSVE: Robust Selective Video Encryption}
\end{table*}
\subsection{Edge Detection Analysis}
\begin{figure*}[!ht]
	\centering
	\includegraphics[width=0.7\textwidth]{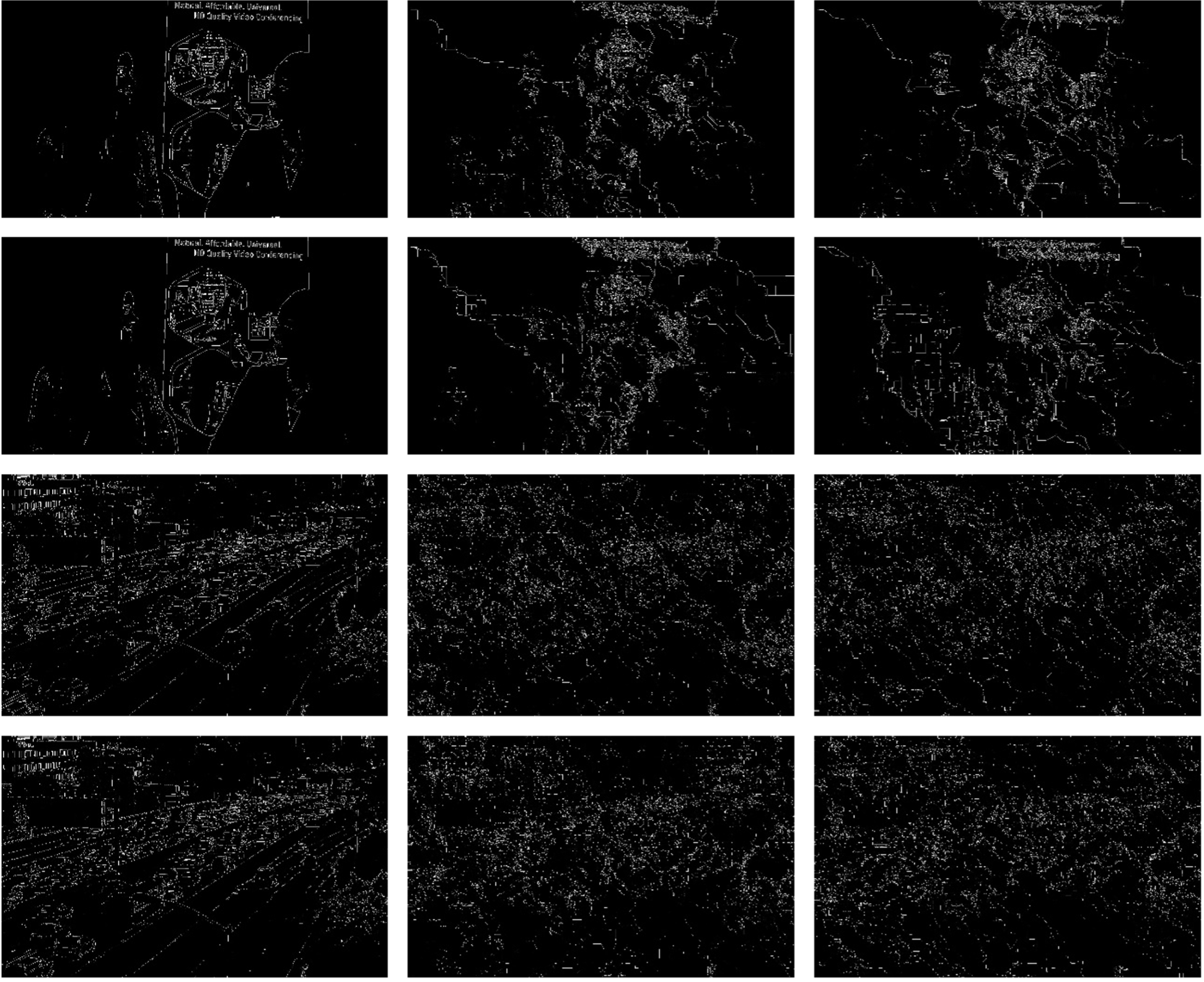}
\caption{Comparison of the edge images of KristenAndSara and the traffic sequence with the RSVE scheme. The first and second rows are the KristenAndSara video sequence (\#1 and \# 30 frame), and the third and fourth rows are the Traffic video sequence (\#1 and \# 30 frame). The first to third columns are the original frame, the RSVE scheme, and the proposed scheme frame, respectively. }
	\label{fig:edge}
\end{figure*}
To avoid revealing important information from the edge of the image, the encrypted edge frame should have a strong distortion effect to resist edge detection attacks. It is evident from Fig. \ref{fig:edge} that, compared with the original edge frame, the encrypted edge frame has a strong degree of distortion; hence, the content cannot be tracked; this shows the significance of the proposed scheme in effectively hiding edge information visually. 
Furthermore, Table \ref{tab:4} displays the evaluation of the edge difference ratio (EDR). All the results are averaged over 100 frames of EDR. The EDR indicators for the encrypted edge frames are close to 1. In comparison, to the RSVE scheme, the proposed scheme shows slightly better performance due to the consideration of coefficient scrambling conditions.
\begin{table}[!htbp]
	\caption{Edge difference ratio results}\label{tab:4}
	\centering
	\resizebox{1.0\columnwidth}{!}{
		\tiny
		\begin{tabular}{|l|c|c|c|c|c|}
			\hline 
			\multirow{2}{*}{\textbf{Video Sequence}} & \multicolumn{5}{c|}{\textbf{EDR}} \\ 
			\cline{2-6}&\textbf{Original} &\textbf{Jovanović \cite{c24}} &\textbf{Sallam \cite{c23-8119905}} &\textbf{RSVE \cite{c13-9903590}}&\textbf{Proposed} \\ \hline
			Mobile	&0.0569 &0.9219 &0.9172 &	\textbf{0.9339} &0.9296 	 \\  \hline 
			Foreman	&0.0853 &0.8833 &0.8993 &	\textbf{0.9345}	&0.9271 	 \\  \hline 
			BlowingBubbles	&0.1037 &0.8846 &0.8917 &0.9190	&\textbf{0.9241} 	 \\  \hline 
			RaceHorses	&0.1091 &0.9241 &0.9234 &0.9378	&\textbf{0.9415} 	 \\  \hline 
			PartyScene	&0.0719 &0.8923 &0.8933 &	\textbf{0.9243} &0.9216 	 \\  \hline 
			BasketballDrillText	&0.1050 &0.8256 &0.8347 &0.8885	&\textbf{0.8921} 	 \\  \hline 
			Johnny	&0.1291 &0.8380 &0.8441 &0.9034 &	\textbf{0.9121} 	 \\  \hline 
			KristenAndSara	&0.0818 &0.8043 &0.7995 &0.8711 &	\textbf{0.8807} 	 \\  \hline 
			BasketballDrive	&0.1206 &0.8980 &0.9001 &	\textbf{0.9454}	&0.9447 	 \\  \hline 
			BQTerrace	&0.0856 &0.8931 &0.8969 &	\textbf{0.9171}	&0.9146 	 \\  \hline 
			Traffic	&0.1361 &0.8896 &0.8866 &0.9228	&\textbf{0.9256} 	 \\  \hline 
			PeopleOnStreet	&0.1437 &0.9011 &0.9003 &0.9311 &	\textbf{0.9349} 	 \\  \hline 
	\end{tabular}}
\end{table}

\subsection{Chosen-Plaintext Attack Analysis}
The attacker creates some plaintext to generate the corresponding ciphertext and tries various means to obtain the correct key through ciphertexts. This type of attack is called a "chosen plaintext attack". $HD_{avg}$ is computed to calculate the difference between two sets of random sequences and to evaluate the resistance to chosen plaintext attacks, as shown in Table \ref{tab:pta}. The $HD_{avg}$ obtained for each group of videos is close to 0.5, indicating that there are large differences between the random sequences generated by each group of videos. The proposed scheme uses the SHA-384 hash value of the Slice Header to update the key and initial vector, so each slice has a different random sequence, indicating that the proposed scheme can effectively resist chosen plaintext attacks. 
\begin{table}[!htbp]
	\caption{Average hamming distance results for all two different videos with the same resolution}\label{tab:pta}
	\centering
	\resizebox{0.99\columnwidth}{!}{
			\tiny
			\begin{tabular}{|l|l|c|c|c|}
				\hline 
				\multirow{2}{*}{\textbf{Video Sequence 1}} & \multirow{2}{*}{\textbf{Video Sequence 2}} & \multirow{2}{*}{\textbf{Resolution}} &   \multicolumn{2}{c|}{\textbf{$HD_{avg}$}} \\
				\cline{4-5}	&&& \textbf{RSVE \cite{c13-9903590}} & \textbf{Proposed} \\ \hline 
				Mobile	&Foreman &352$\times$288&0.4968 &\textbf{0.4951} \\ \hline
				BlowingBubbles	&RaceHorses &416$\times$240&\textbf{0.4940} &	0.4952 \\ \hline
				PartyScene	&BasketballDrillText &832$\times$480&\textbf{0.4926} &	0.4939 \\ \hline
				Johnny&	KristenAndSara	&1280$\times$720&0.4991 &\textbf{0.4882} \\ \hline
				BasketballDrive	&BQTerrace &1920$\times$1080&0.4981 &	\textbf{0.4971} \\ \hline
				Traffic&	PeopleOnStreet &2560$\times$1600&0.4931 &\textbf{0.4906} \\ \hline
	\end{tabular}}
	
\end{table}
\subsection{Replacement Attack Analysis}
Attackers may try to modify the encrypted content in an attempt to find some clues about the encrypted video, which is called a "substitution attack". The experiment involves four substitution attacks on the PartyScene video for testing, namely:
\begin{itemize}
\item Replacing the encrypted Luma IPM with the Luma IPM as the first of the three most likely candidate prediction modes for the neighboring PU.
\item Replacing the encrypted QTC symbol with the QTC symbol =1
\item Replacing the encrypted MVD symbol with the MVD symbol =1
\item Replacing the encrypted MVD suffix with an MVD suffix =0.
\end{itemize}
\begin{figure*}[!ht]
	\centering
	\includegraphics[width=0.7\textwidth]{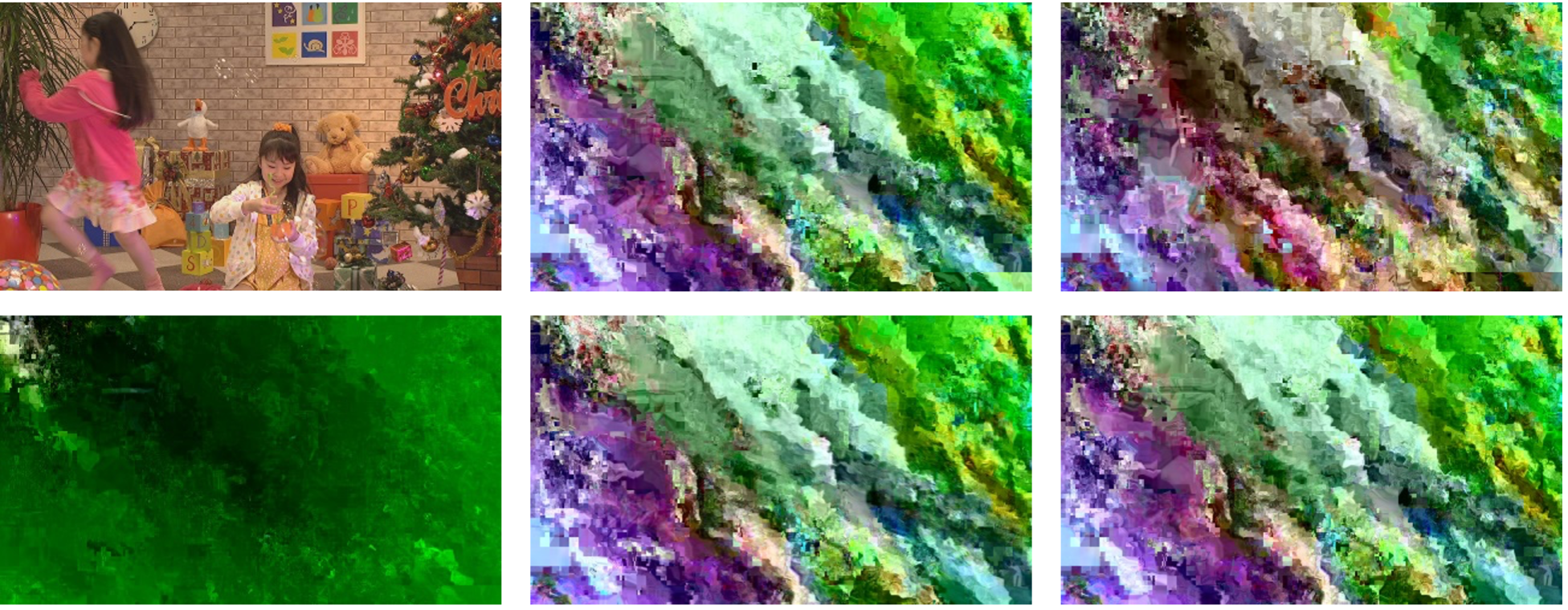}
\caption{Replacement attack analysis of the PartyScene video sequence (\# 30 frame). Original frame (upper left), encrypted frame (upper middle) and decrypted frames of the four substitution attacks. }
	\label{fig:raa}
\end{figure*}
From Fig. \ref{fig:raa}, it can be observed that the results of trying to use four alternative attacks still cannot reveal some information related to the original frame. 
\begin{table}[!htbp]
	\caption{PSNR and SSIM results of PartyScene video under different attack conditions} \label{tab:raa}
	\centering
	\resizebox{0.8\columnwidth}{!}{
			\tiny
			\begin{tabular}{|l|l|c|}
				\hline 
				\textbf{Video Status} & \textbf{PSNR} & \textbf{SSIM}\\ \hline 
				Original &34.1943&	0.9457 \\ \hline
				Encrypt	&10.8279	&0.1489 \\ \hline
				Replacement Attack 1&	11.2205	&0.1497 \\ \hline
				Replacement Attack 2&	10.0635	&0.2129 \\ \hline
				Replacement Attack 3&	10.7949	&0.1464 \\ \hline
				Replacement Attack 4&	10.8971	&0.1545 \\ \hline
	\end{tabular}}
\end{table}
Table \ref{tab:raa} portrays the results when the PSNR and SSIM indicators are used, considering that each frame is the result of the 30th frame of the PartyScene video. The decrypted frames of the four substitution attacks have smaller PSNR and SSIM values than the original frames do, and the PSNR and SSIM values are similar to each other compared with those of the encrypted frames. Thus, the proposed scheme can effectively resist substitution attacks. 
\subsection{Information Entropy Analysis}
Information entropy, also known as Shannon entropy, is used to measure the uncertainty of an event. The greater the information entropy, the more chaotic it is, and the probability of each event being average. Table \ref{tab:IEA} shows the information entropy index analysis results for various videos of 100 frames. The results indicate that the proposed scheme maintains a significant level of randomness in pixel distribution. However, the RSVE scheme demonstrates superior performance in terms of information entropy indices. 
\begin{table}[!htbp]
	\caption{Information entropy results}\label{tab:IEA}
	\centering
	\resizebox{1.0\columnwidth}{!}{
		\tiny
		\begin{tabular}{|l|c|c|c|c|c|}
			\hline 
			\multirow{2}{*}{\textbf{Video Sequence}} & \multicolumn{5}{c|}{\textbf{Information Entropy}} \\ 
			\cline{2-6}&\textbf{Original} &\textbf{Jovanović \cite{c24}} &\textbf{Sallam \cite{c23-8119905}} &\textbf{RSVE \cite{c13-9903590}} &\textbf{Proposed} \\ \hline
			Mobile	&7.6477 &7.8328 &7.8356 &\textbf{7.9159} &7.6351  \\  \hline 
			Foreman	&7.2290 &7.4967 &7.5597 &\textbf{7.6519}	&7.1786  \\  \hline 
			BlowingBubbles	&7.3596 &7.5867 &7.4515 &\textbf{7.6711}	&7.6489  \\  \hline 
			RaceHorses	&6.9685 &7.6297 &7.6359 &7.7302	&\textbf{7.7367}  \\  \hline 
			PartyScene	&7.4359 &7.8441 &7.8611 &\textbf{7.8621}	&7.8091  \\  \hline 
			BasketballDrillText	&6.9401 &7.6065 &7.8590 &7.5805	&\textbf{7.6999}  \\  \hline 
			Johnny	&6.8321 &6.9636 &6.8960 & \textbf{7.0704} &6.9851  \\  \hline 
			KristenAndSara	&7.1588 &7.1632 &7.1760 & \textbf{7.4700}	&7.3915  \\  \hline 
			BasketballDrive	&7.0192 &7.4958 &7.3175 & \textbf{7.6533}	&7.3984  \\  \hline 
			BQTerrace	&7.2192 &7.7355 &7.7098 &\textbf{7.8445}	&7.7709  \\  \hline 
			Traffic	&7.3522 &7.8077 &7.8345 &\textbf{7.9027}	&7.7848  \\  \hline 
			PeopleOnStreet	&7.0169 &7.8669 &7.8392 &\textbf{7.8877} &7.8615  \\  \hline 
	\end{tabular}}
\end{table}
\subsection{Bit Rate Overhead Analysis}
The values for the \textit{bit rate overhead}, an important parameter for effective video encryption, are shown in Table \ref{tab:5}. 
When coefficient scrambling is considered, the proposed scheme outperforms the RSVE scheme by an average of 45.13\% in terms of bit rate overhead. This suggests that the bit rate overhead can be significantly improved compared with that of the RSVE scheme.
\begin{table}[!htbp]
	\caption{Bit rate overhead results}\label{tab:5}
	\centering
	\resizebox{0.9\columnwidth}{!}{
		\tiny
			\begin{tabular}{|l|c|c|c|}
				\hline 
				\multirow{2}{*}{\textbf{Video Sequence}} & \multicolumn{2}{c|}{\textbf{Bit Rate Overhead}} &\textbf{Improvement} \\ 
				\cline{2-3} &\textbf{RSVE \cite{c13-9903590}} &\textbf{Proposed}& \textbf{Rate (\%)}\\ \hline
				Mobile	 &0.0566 &\textbf{0.0187} &66.96 \\ \hline
				Foreman	 &0.0714 &\textbf{0.0426} &40.34  \\ \hline
				BlowingBubbles	&0.0666 &\textbf{0.0350} &47.45  \\ \hline
				RaceHorses	 &0.0661 & \textbf{0.0286} &56.73  \\ \hline
				PartyScene	 &0.0706 &\textbf{0.0352} &50.14  \\ \hline
				BasketballDrillText	 &0.0799 & \textbf{0.0481} &39.80  \\ \hline
				Johnny	 &0.0764 &\textbf{0.0582} &23.82  \\ \hline
				KristenAndSara	 &0.0866 &\textbf{0.0590} &31.87  \\ \hline
				BasketballDrive	 &0.0861 &\textbf{0.0424} &50.75  \\ \hline
				BQTerrace	 &0.0566 & \textbf{0.0323}  &42.93  \\ \hline
				Traffic	&0.0697 & \textbf{0.0416} &40.32  \\ \hline
				PeopleOnStreet	 &0.0767 &\textbf{0.0380} &50.46   \\ \hline
	\end{tabular}}
	
\end{table}
\subsection{Key Space Analysis}
The attacker tries every possible key to crack the key, such attack is called a "brute force attack". To resist brute force attacks, the key space must be very large. The proposed scheme uses the AES encryption algorithm to create a random sequence. With a key size of 256 bits, there are $2^{256}$ possible combinations required to crack the key, indicating that the proposed scheme has an adequate ability to withstand attacks.
\subsection{Encryption Time Overhead Analysis}
The proposed algorithm demonstrates improved performance for low-resolution video sequences, making it suitable for mobile devices. Table \ref{tab:6} indicates that the proposed scheme performs comparably to the RSVE scheme across 12 videos. However, when considering the average encryption time overhead for all the videos, the RSVE scheme has an overhead of approximately 17.05\%, while the proposed scheme has only about 1.91\%. This finding indicates that the proposed scheme is highly efficient, offering better performance in terms of encryption time overhead.  
\begin{table}[!htbp]
	\caption{Encryption time overhead results}\label{tab:6}
	\centering
	\resizebox{0.7\columnwidth}{!}{
		\tiny
			\begin{tabular}{|l|c|c|}
				\hline 
				\multirow{2}{*}{\textbf{Video Sequence}}& \multicolumn{2}{c|}{\textbf{Overhead}} \\ 
				\cline{2-3} &\textbf{RSVE \cite{c13-9903590}} &\textbf{Proposed} \\ \hline
				Mobile	 &0.0207 &\textbf{0.008}   \\ \hline
				Foreman	 &0.0211 & \textbf{0.008}    \\ \hline
				BlowingBubbles	 &0.0127 &\textbf{0.010}    \\ \hline
				RaceHorses	&0.0227 & \textbf{0.016}   \\ \hline
				PartyScene	 &0.0174 &\textbf{0.009}   \\ \hline
				BasketballDrillText	 &0.0186 &\textbf{0.011}   \\ \hline
				Johnny	 &\textbf{0.0043} & 0.016  \\ \hline
				KristenAndSara	 &\textbf{0.0199}& 0.027 \\ \hline
				BasketballDrive	 &\textbf{0.0227} &0.024  \\ \hline
				BQTerrace &\textbf{0.0166} &0.034   \\ \hline
				Traffic	 &\textbf{0.0065} & 0.036  \\ \hline
				PeopleOnStreet	&\textbf{0.0214} &0.030   \\ \hline
	\end{tabular}}
	
\end{table}
\subsection{NPCR and UACI Analysis}
Table \ref{tab:NP} and Table \ref{tab:UA} present a comparison of the NPCR (Number of Pixels Change Rate) and UACI (Unified Average Changing Intensity) results for various video sequences, alongside findings from state-of-the-art works, including Wallendael et al. \cite{c4-6486782}, Sallam et al. \cite{c23-8119905}, and Boyadjis et al. \cite{c12-7370952}. The proposed scheme demonstrates optimal performance, outperforming the other approaches in terms of NPCR. Additionally, the UACI results are commendable, as they closely align with the expected value 0.334635.

\begin{table}[!htbp]
	\caption{Comparison NPCR results for various video sequences: proposed scheme v/s sota}\label{tab:NP}
	\centering
	\resizebox{1.0\columnwidth}{!}{
		\tiny
		\begin{tabular}{|l|c|c|c|c|c|}
			\hline 
			\textbf{Video} &\multicolumn{5}{c|}{\textbf{NPCR}} \\
			\cline{2-6}\textbf{Sequence}&\multirow{2}{*}{\textbf{Original}}&\textbf{Wallendael} &\textbf{Sallam}&\textbf{Boyadjis}&\multirow{2}{*}{\textbf{Proposed}} \\ 
			&&\textbf{\cite{c4-6486782}} &\textbf{\cite{c23-8119905}}&\textbf{\cite{c12-7370952}}&  \\ \hline 
			\cline{2-6}Mobile&	0.72235	&0.99559 &0.99562  &0.99569 & \textbf{0.99437} \\ \hline
			\cline{2-6}Foreman&0.70214	&0.99569 &0.99565 &0.99592 &\textbf{0.99837}  \\ \hline
			\cline{2-6}PartyScene	&0.72662 &0.99510 &0.99515 &0.99498 & \textbf{0.99665} \\ \hline
			\cline{2-6}Johnny	&0.65342 &0.99419 &0.99384 &0.99386 & \textbf{0.99801}  \\	 \hline
			\cline{2-6}BasketballDrive&0.69969 &0.99509 &0.99520 &0.99417 &\textbf{0.99612} \\ \hline
	\end{tabular}}
	
\end{table}
\begin{table}[!htbp]
	\caption{Comparison UACI results for various video sequences: proposed scheme v/s sota}\label{tab:UA}
	\centering
	\resizebox{1.0\columnwidth}{!}{
		\tiny
		\begin{tabular}{|l|c|c|c|c|c|}
			\hline 
			\textbf{Video} & \multicolumn{5}{c|}{\textbf{UACI}}\\ 
			\cline{2-6}\textbf{Sequence} &\multirow{2}{*}{\textbf{Original}}&\textbf{Wallendael} &\textbf{Sallam}&\textbf{Boyadjis}&\multirow{2}{*}{\textbf{Proposed}}\\ 
			&&\textbf{\cite{c4-6486782}} &\textbf{\cite{c23-8119905}}&\textbf{\cite{c12-7370952}}& \\ \hline 
			\cline{2-6}Mobile&0.01533 &0.28636 & 0.28667 &0.29201 &0.26266 \\ \hline
			\cline{2-6}Foreman&0.00982 &0.29351 &0.29340 &0.29130 &0.452675 \\ \hline
			\cline{2-6}PartyScene	&0.01509 &0.24876 &0.24931 &0.24495 & 0.32645 \\ \hline
			\cline{2-6}Johnny &0.00574 &0.26602 &0.26903 &0.26239 &0.397382 \\	 \hline
			\cline{2-6}BasketballDrive &0.00814 &0.23606 &0.23898 &0.21722 &0.264240 \\ \hline
	\end{tabular}}
	
\end{table}
\subsection{Comparison}
\subsubsection{Computational Complexity}
Table \ref{tab:complexity} analyzes the complexities of the proposed approach compared to state-of-the-art models. It considers $N$ as the number of syntax elements or transform coefficients processed, $M$ as the number of semantic features or elements extracted, $P$ as the parameters related to steganography payload size or hashing complexity, $K$ as the size parameter for SHA-256, and $H$ as the hierarchical levels in encryption. The computational complexities increase in the following order: Proposed Scheme $\equiv$ \cite{Comp4-10814092} $\equiv$ \cite{c5-8746776} $<$ \cite{c13-9903590} $<$ \cite{Comp2-doi:10.1142/S0218127424500135} $<$ \cite{Comp1-10556633} $<$ \cite{Comp3-10.1145/3698400}. 
\begin{table}[!htbp]
	\centering
	\caption{Comparison: computational complexity}	\label{tab:complexity}
	\resizebox{0.8\columnwidth}{!}{
		\tiny
			\begin{tabular}{|c|c|}
				\hline
				\textbf{Contributor} &  \textbf{Computational Complexity} \\ \hline
				Sheng et al. \cite{Comp4-10814092} & $\mathcal{O}(N)$  \\	\hline
				Chen et al. \cite{c13-9903590} &  $\mathcal{O}(N)+  \mathcal{O}(K)$\\ \hline
				Sheng et al. \cite{Comp1-10556633} & $\mathcal{O}(N) + \mathcal{O}(M)$ \\	\hline
				Peng et al. \cite{c5-8746776} & $\mathcal{O}(N)$ \\ \hline
				Sheng et al. \cite{Comp3-10.1145/3698400} & $\mathcal{O}(N) + \mathcal{O}(P) + \mathcal{O}(N \log N)$  \\	\hline
				Xing et al. \cite{Comp2-doi:10.1142/S0218127424500135} & $\mathcal{O}(N) + \mathcal{O}(H)$ \\	\hline
				Proposed Scheme & $\mathcal{O}(N)$ \\ \hline
	\end{tabular}}
\end{table}
\subsubsection{Comparative Analysis}
The proposed scheme is evaluated and compared with other related H.265 SEA studies using six key criteria (C1–C6) to ensure both security and practicality. C1 (\textit{Format Compliance}) checks whether the encrypted bitstream still adheres to the H.265/HEVC standard, allowing playback on standard decoders. It is marked as $\surd$ (Yes) if format compliance is preserved, or $\times$ (No) if the encryption breaks standard compliance. C2 (\textit{Distortion Level of the Encrypted Video}) measures how unintelligible the video becomes without decryption. It is marked as $\uparrow$ (High) for strong distortion with excellent confidentiality, $\leadsto$ (Medium) for partial distortion with moderate confidentiality, and $\downarrow$ (Low) for weak distortion with poor confidentiality. A higher distortion level is desirable, as it means the encrypted video is more secure and unintelligible to unauthorized viewers. If the distortion is low, the content may still be recognizable, leading to poor confidentiality. C3 (\textit{Execution Time Overhead}) assesses the extra computational cost introduced by encryption. It is marked as $\uparrow$ (High) for heavy time overhead (inefficient), $\leadsto$ (Medium) for moderate overhead (acceptable), and $\downarrow$ (Low) for minimal overhead (best performance). Lower overhead is preferable for real-time or large-scale applications, such as streaming, while high overhead can make the system impractical in real-world deployments. C4 (\textit{Bit Rate Overhead}) considers the impact on compression efficiency, as a minimal increase in bit rate preserves storage and transmission efficiency. It is marked as $\uparrow$ (High) for a significant increase in bit rate (poor efficiency), $\leadsto$ (Medium) for a moderate increase (tolerable), and $\downarrow$ (Low) for minimal increase (best case). Minimizing bit rate overhead ensures that encrypted videos remain efficient to store and transmit, whereas high overhead may strain bandwidth and storage. C5 (\textit{Robustness to Slice Loss}) evaluates whether decryption remains functional even if part of the video stream is lost, ensuring reliability in practical networks. It is marked as $\surd$ (Yes) if decryption remains functional despite slice loss, and $\times$ (No) if decryption fails or produces unusable results after slice loss. Finally, C6 (\textit{Resistance to Chosen-Plaintext Attack}) determines the scheme’s resilience against adversaries attempting to exploit known input-output pairs to break the encryption. It is marked as $\surd$ (Yes) if it is resistant to CPA (high security) and $\times$ (No) if it is vulnerable to CPA (weak security). Together, these six metrics comprehensively balance compatibility, confidentiality, efficiency, robustness, and security. 
The comparison results regarding format compatibility, visual distortion, chosen-plaintext attack, bit rate overhead, and robustness are presented in Table \ref{tab:7}. Most SEAs have characteristics that satisfy C1 and C3, but most of them are still insufficient in C2. Additionally, some SEAs that meet C2 still do not perform well in C4. The coefficient scrambling scheme in the RSVE method leads to a significant bit rate overhead. Therefore, the scheme proposed in this paper aims to satisfy C2 and all other five performance criteria. Consequently, the proposed scheme is competent for other H.265 SEA research. This reflects the significant reliability of the proposed approach for practical applications. 
\begin{table*}[!htbp]
	\caption{Comparative analysis: proposed scheme v/s state-of-the-art performance parameters}\label{tab:7}
	\centering
	\resizebox{0.9\textwidth}{!}{
		\tiny
		\begin{tabular}{|p{38pt}|p{85pt}|p{60pt}|c|c|c|c|c|c|c|}
			\hline 
			\textbf{Contributor} &\textbf{Encryption Method} &\textbf{Cipher Method} &\textbf{Year} &\textbf{C1} &\textbf{C2} &\textbf{C3} &\textbf{C4} &\textbf{C5} &\textbf{C6} \\ \hline  
			Lee et al. \cite{c2-9249233}	&NALU Header &	AES-ECB	&2020&	$\times$&$-$	&$\downarrow$	&$\downarrow$	&$\times$	&$\times$  \\ \hline 
			 Zhang et al. \cite{c8-9345211}	&MVD, QTC, Delta QP	&AES-CTR	&2020	&$\surd$	&$\downarrow$	&$\downarrow$	&$\downarrow$	&$\times$	&$\times$ \\ \hline 
			Zhang et al. \cite{c14-9390925}	&MVD, QTC, Delta QP	&Lorenz chaotic map	&2021	&$\surd$	&$\downarrow$	&$\leadsto$	&$\downarrow$	&$\times$	&$\times$ \\ \hline 
			Sallam et al. \cite{c6-Sallam2018}	&QTC, MVD, Delta QP, SAO sign, Reference frame index, Residual size	&RC6	&2018	&$\surd$	&$\leadsto$	&$\downarrow$	&$\downarrow$	&$\times$	&$\times$ \\ \hline 
			Wen et al. \cite{c9-Wen2023}	&QTC sign, Luma IPM	&AES-CTR	&2022	&$\surd$	&$\leadsto$	&$\downarrow$	&$\downarrow$	&$\times$	&$\times$ \\ \hline 
			Liu et al. \cite{c16}	&QTC, MVD, Delta QP sign, SAO sign, Merge index, Reference frame index, Partition mode	&Integer dynamic tent mapping	&2020	&$\surd$	&$\downarrow$	&$\downarrow$	&$\downarrow$	&$\times$	&$\times$ \\ \hline 
			Ye et al. \cite{c17}	&MVD, QTC, Delta QP	&L-ICMIC-CML chaotic map	&2021	&$\surd$	&$\downarrow$	&$\downarrow$	&$\downarrow$	&$\times$	&$\times$ \\ \hline 
			Peng et al. \cite{c5-8746776}	&Luma IPM, Chroma IPM, QTC, MVD, Merge index, MVP index, Reference frame index, SAO, Coefficient scrambling	&AES-CTR	&2020	&$\surd$	&$\uparrow$	&$\uparrow$	&$\uparrow$	&$\times$	&$\times$ \\ \hline 
			Chen et al. \cite{c13-9903590}	&QTC, MVD, Luma IPM, Coefficient scrambling	&Improved RC4	&2023	&$\surd$	&$\uparrow$	&$\downarrow$	&$\leadsto$	&$\surd$	&$\surd$ \\ \hline 
			Sheng et al. \cite{Comp1-10556633} & Chaos-Based Tunable Selective Encryption with Semantic Understanding &Logistic map-based chaos, Semantic content analysis& 2024 & $\surd$ & $\uparrow$ & $\leadsto$ & $\leadsto$ & $\times$ & $\surd$ \\ \hline 
			Xing et al. \cite{Comp2-doi:10.1142/S0218127424500135} &Hierarchical Multiscenario HEVC Video Encryption Scheme & Hierarchical selective encryption with chaotic sequences and symmetric cipher &2024 & $\surd$ & $\uparrow$ & $\leadsto$ &$\leadsto$ & $\surd$ & $\surd$  \\ \hline 
			Sheng et al. \cite{Comp3-10.1145/3698400} & Deep Hashing Network and Steganography Based Selective Encryption & Deep hashing, Steganographic embedding, Symmetric cipher & 2024 & $\times$ & $\uparrow$ & $\uparrow$ & $\uparrow$ & $\times$ & $\surd$ \\ \hline 
			Sheng et al. \cite{Comp4-10814092} & Content-Aware Tunable Selective Encryption using Sine-Modular Chaotification Model & Chaotic map (Sine-Modular) based coefficient scrambling &2025 & $\surd$ & $\uparrow$ & $\leadsto$ & $\downarrow$ & $\times$ & $\surd$ \\ \hline
			\textbf{Proposed Scheme} &QTC, MVD, Luma IPM, Coefficient scrambling &AES-CTR	&2025	&$\surd$	&$\uparrow$	&$\downarrow$	&$\downarrow$	&$\surd$	&$\surd$ \\ \hline 
	\end{tabular}}
	
	\footnotesize{$\uparrow$: High, $\downarrow$: Low, $\leadsto$: Moderate, $\surd$: Yes, $\times$: No, $-$: Unknown} 
\end{table*}
\section{Discussion and Limitations}\label{sec:dl}
The evaluation of the proposed chaotic map-based video encryption scheme across multiple quality, security, and efficiency metrics reveals that, while it demonstrates strong performance in several areas, it does not consistently outperform existing approaches in all scenarios. Each metric highlights specific strengths and trade-offs of the design. The higher PSNR after decryption indicates that the proposed scheme preserves visual quality more effectively than competing methods. This advantage arises from the selective and chaos-driven encryption design, which perturbs critical coefficients without introducing excessive reconstruction noise. As a result, the decrypted video closely matches the original content. However, despite the strong PSNR, the scheme underperforms in SSIM, suggesting that while overall pixel-level fidelity is high, subtle structural information and texture consistency are not fully preserved. This discrepancy may be attributed to chaotic modifications of coefficients, which can distort local structures and edges in ways that PSNR does not capture. SSIM's sensitivity to structural degradation explains why it presents lower scores. Additionally, the encrypted frames exhibit strong edge preservation, indicating that the chaotic scrambling does not oversmooth or blur significant features. This is particularly beneficial for applications where edge integrity is critical, such as surveillance. However, this may also explain the weaker SSIM, as excessive edge contrast can lead to reduced perceived structural similarity. Moreover, the scheme effectively resists chosen-plaintext attacks, owing to the sensitivity of chaotic maps to initial conditions and keys. Even minor differences in input lead to vastly different encrypted outputs, reinforcing its cryptographic strength. This characteristic shows that the design is well-suited against advanced adversarial models. However, the entropy values of the encrypted frames are lower, suggesting incomplete randomness in the ciphertext distribution. This limitation arises because the chaotic map, while non-linear, may introduce statistical bias in certain regions of the frame. The partial determinism in coefficient modification could reduce randomness, allowing for the presence of residual patterns. The scheme introduces minimal bit rate expansion compared to RSVE and similar methods. By carefully preserving compression-friendly structures in the H.265 stream, it avoids disrupting entropy coding efficiency, ensuring better bandwidth utilization for practical deployment. The computational burden of chaotic mapping is comparable to that of the RSVE scheme, balancing complexity with efficiency. While it does not outperform RSVE, it shows that the proposed scheme maintains real-time feasibility. This parity results from iterative chaotic computations, which, although secure, do not provide a speed advantage over lighter alternatives. Additionally, a high NPCR indicates strong sensitivity to plaintext changes, meaning the encryption effectively diffuses small input variations across the entire frame. The chaotic diffusion mechanism enhances pixel-level unpredictability, contributing to robustness against differential attacks. Although the NPCR is strong, the UACI results are comparable to existing methods, suggesting that while the encryption process introduces frequent pixel value changes, the magnitude of these changes is not sufficiently large across the image domain. This indicates that the diffusion effect is strong in terms of pixel position changes but limited in terms of altering pixel intensity amplitude. This can be explained by the selective encryption approach, in which only certain coefficients are modified, leading to limited amplitude variation. Overall, the inconsistency in performance across metrics reflects a core trade-off: the scheme prioritizes preserving compression efficiency (with low bit rate overhead) and ensuring strong cryptographic resistance (NPCR) at the expense of structural similarity (SSIM), entropy randomness, and UACI. Specifically, reliance on chaotic maps while beneficial for diffusion and attack resistance may create predictable statistical behavior in certain coefficient domains, reducing overall entropy. Similarly, selective encryption preserves format and bandwidth efficiency but inherently sacrifices structural integrity and complete intensity diffusion. 

The mixed performance of the proposed scheme highlights the need for targeted improvements to achieve more consistent results across all metrics. To address the underperformance in SSIM, information entropy, and UACI, future work could explore hybrid chaotic diffusion models that combine pixel and transform domain operations to better preserve structural similarity while enhancing randomness. Incorporating adaptive coefficient selection may balance distortion strength with compression efficiency, improving both SSIM and entropy. Furthermore, integrating multi-round chaotic transformations or entropy-boosting post-processing steps could enhance ciphertext unpredictability without significantly increasing bit rate or execution time. These refinements would strengthen the scheme’s robustness while maintaining its current advantages in PSNR, edge preservation, NPCR, and bit rate efficiency. 
\section{Conclusion and Future Work}\label{sec:con}
This paper presents a coefficient scrambling scheme based on a chaotic map by devising scrambling conditions for the coefficients wisely. It specifically focuses on encrypting three types of information: motion vector differences (MVD), quantization table coefficients (QTC), and Luma IPM (Luma Intra Prediction Modes). Additionally, the scheme synchronizes the generated random sequence with the slice and links it to the H.265 encoded stream. The experimental results demonstrate that the proposed scheme offers high security, compliance with formats, fast execution time, synchronous update with the slice, and resistance to common attacks. Compared with the RSVE scheme, it reduces the average bit rate overhead. Therefore, the proposed coefficient scrambling scheme presents a superior option for enhancing the security of H.265/HEVC video SEA. Future work will focus on developing a video encryption algorithm for critical areas via a 2D extended Schaffer function map and neural networks. Additionally, temporal action segmentation will be integrated to improve the accuracy, robustness, and adaptability of the model for video streams. 



\bibliographystyle{IEEEtran}
\bibliography{reference}

 \begin{IEEEbiography}[{\includegraphics[width=1in,height=1.25in,clip,keepaspectratio]{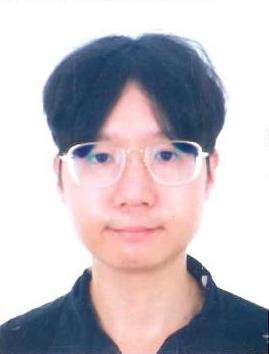}}]{Liang-Wei Li} has worked for his post graduation at Department of Computer Science and Engineering, National Sun Yat-sen University (NSYSU), Kaohsiung, Taiwan. His master's thesis research was based on the video encryption, especially the encryption of H.265 encoding. He had excellent academic performance in college received three certificates of excellence, and ranked in the top three overall. Currently, he is working as a Firmware Engineer.
\end{IEEEbiography}
\vspace{11pt}
\begin{IEEEbiography}[{\includegraphics[width=1in,height=1.25in,clip,keepaspectratio]{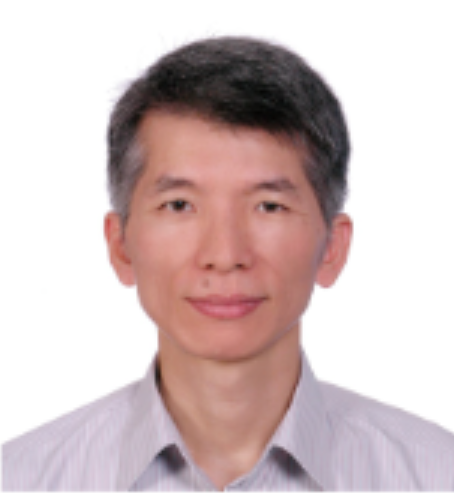}}]{Chung-Nan Lee} (Member, IEEE) received the B.S. and the M.S. degrees in electrical engineering from the National Cheng Kung University, Tainan, Taiwan, in 1980 and 1982, respectively, and the Ph.D. degree in electrical engineering from the University of Washington, Seattle, WA, USA, in 1992. Since 1992, he has been with the National Sun Yat-Sen University, Kaohsiung, Taiwan. He served as Chairman of the Department of Computer Science and Engineering, from 1999 to 2001. Dr. Lee was the President of the Taiwan Association of Cloud Computing, from 2015 to 2017, and the VP for TA of Asia-Pacific Signal and Information Processing Association, from 2019 to 2020. In 2016, he received an outstanding engineering Professor Award from the Chinese Institute of Engineers, Taiwan. 
	His research interests include multimedia over wireless networks, cloud computing, and IoT. 
\end{IEEEbiography}
\vspace{11pt}
\begin{IEEEbiography}[{\includegraphics[width=1in,height=1.25in,clip,keepaspectratio]{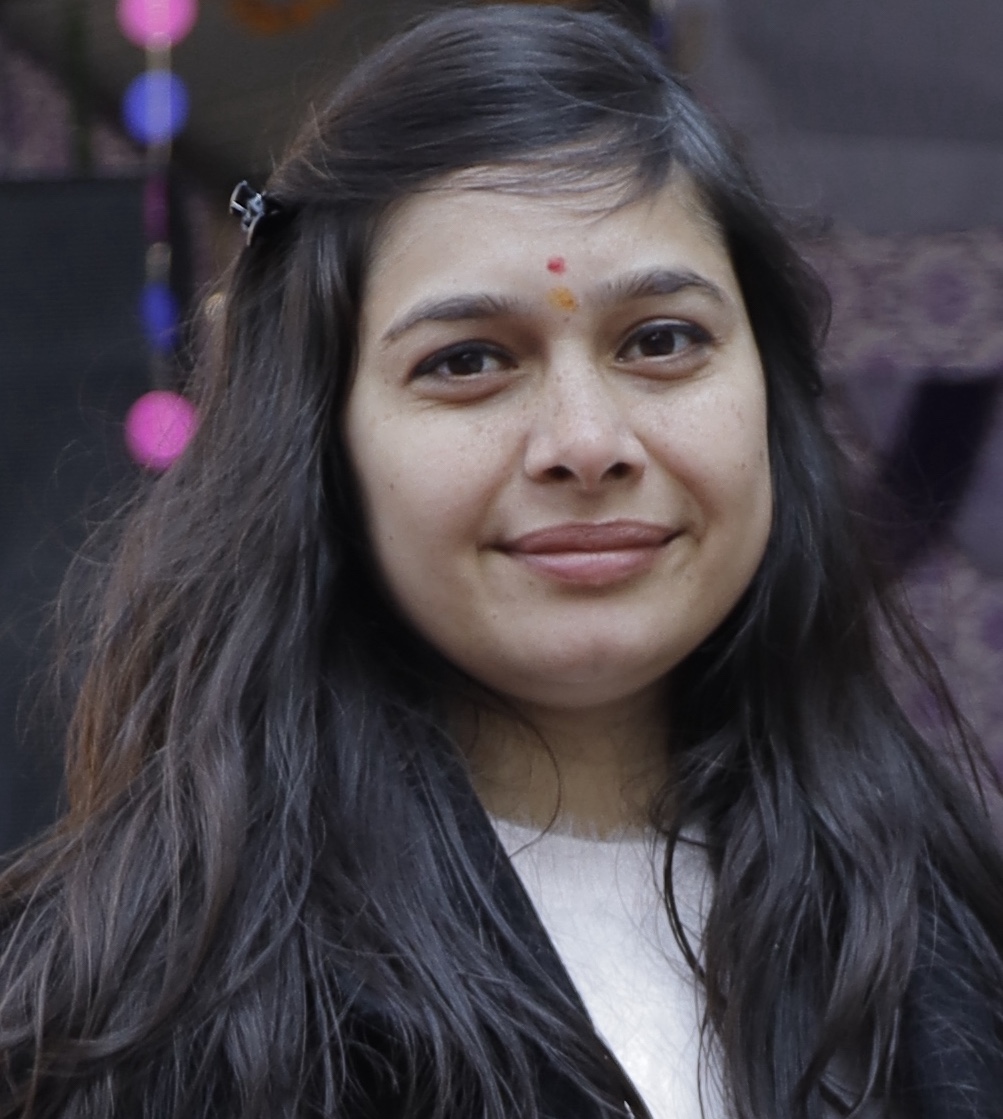}}]{Kishu Gupta} (Member, IEEE) received the Ph.D. degree in computer science from India in 2023. She is associated as a Post-Doctoral Researcher with National Sun Yat-sen University (NSYSU), Kaohsiung, Taiwan. She receiving a prestigious INSPIRE Fellowship sponsored by the DST, India. She has research findings published with top-notch venues, including IEEE TNNLS, IEEE TSMC, IEEE TASE, IEEE TITS, IEEE J-BHI, IEEE TCE, E-JNCA, Applied Soft Computing, Scientific Reports, Cluster Computing, and Procedia-Computer Science. Her major research interests include data security and privacy, evolutionary optimization, cloud computing, traﬃc management, federated learning, machine learning, neural networks, and quantum ML. Also, she was honored with a Gold Medal for her Academic Excellence (Ist Rank in M.Sc.), shortlisted for the Taiwan Comprehensive University System-Young Scholar (TCUS-YS) award, and received a Best Paper Award in RTIP2R-2024 Conference.
\end{IEEEbiography}
\vspace{11pt}
\begin{IEEEbiography}[{\includegraphics[width=1in,height=1.25in,clip,keepaspectratio]{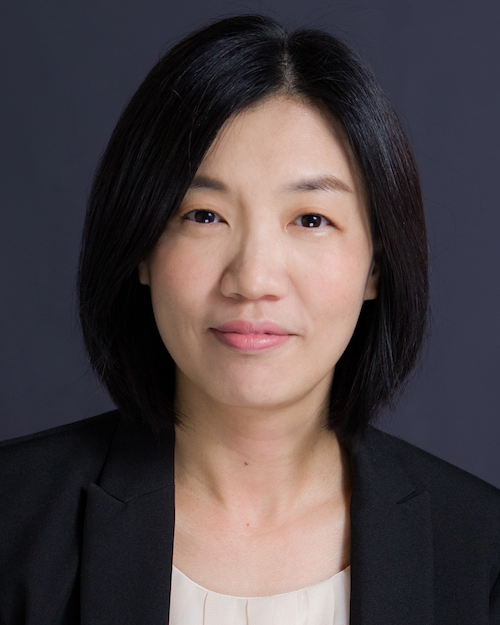}}]{Huei-Fang Yang} (Member, IEEE) received her Ph.D. in computer science from Texas A\&M University, College Station, TX, USA, in 2011. She is an associate professor in the Department of Computer Science and Engineering at National Sun Yat-sen University, Taiwan. Her research interests include computer vision and deep learning. In recent years, her work has focused on image retrieval and medical image analysis. 
	
\end{IEEEbiography}
\vspace{11pt}
\begin{IEEEbiography}[{\includegraphics[width=1in,height=1.25in,clip,keepaspectratio]{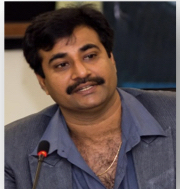}}]{Ashutosh Kumar Singh} (Senior Member, IEEE) is working as a Professor and Director of Indian Institute of Information Technology Bhopal, India. Also, he is working as Adjunct Professor in the VIZJA University, Warsaw, Poland. He received his Ph.D. in Electronics Engineering from Indian Institute of Technology, BHU, India and Post Doc from Department of Computer Science, University of Bristol, UK. He has research and teaching experience in various Universities of the India, UK, and Malaysia. His research area includes Design and Testing of Digital Circuits, Data Science, Cloud Computing, Machine Learning, Security. He has published more than 400 research papers in different journals and conferences of high repute. His research paper, published in the IEEE Transactions on Cloud Computing Journal, was honored with the 2022 Best Paper Award by the IEEE Computer Society Publications Board. 
\end{IEEEbiography}

\end{document}